\documentclass[reqno,12pt]{amsart}

\usepackage[margin=1in]{geometry}
\usepackage{amsmath,amssymb,amsthm,mathtools}
\usepackage[dvipsnames]{xcolor}
\usepackage{enumitem}
\usepackage{graphicx}
\usepackage{tikz}
\colorlet{LinkBlue}{MidnightBlue!99!black}
\usepackage[colorlinks=true,linkcolor=LinkBlue,citecolor=LinkBlue,urlcolor=LinkBlue,hypertexnames=false]{hyperref}
\usepackage[all]{hypcap}
\usepackage{aliascnt}
\usepackage[nameinlink,noabbrev]{cleveref}
\usetikzlibrary{arrows.meta,calc,patterns}
\definecolor{figaccent}{HTML}{1BAF7A} 
\newcommand{\samplegraph}{%
    \coordinate (x) at (0,0);      \coordinate (p1) at (1,0);
    \coordinate (p2) at (2,0.9);   \coordinate (p3) at (2,-0.9);
    \coordinate (p4) at (3,0);     \coordinate (p5) at (4,0);
    \coordinate (y) at (4.9,0.75); \coordinate (r) at (4.9,-0.75);}
\newcommand{\samplegraphedges}{(x)--(p1) (p1)--(p2) (p1)--(p3) (p2)--(p4) (p3)--(p4) (p4)--(p5) (p5)--(y) (p5)--(r)}

\usepackage{autonum}

\newcommand{\R}{\mathbb{R}}

\newcommand{\supp}{\operatorname{supp}}
\newcommand{\C}{\mathbb{C}}
\newcommand{\epsi}{\varepsilon}

\newcommand{\norm}[1]{\lVert #1 \rVert}
\newcommand{\HH}{\mathcal{H}}

\newcommand{\DD}{\mathcal{D}}

\newcommand{\defeq}{\stackrel{\mathrm{def}}{=}}
\newcommand{\1}{\mathbf{1}}

\newtheorem{theorem}{Theorem}[section]
\newaliascnt{proposition}{theorem}
\newtheorem{proposition}[proposition]{Proposition}
\aliascntresetthe{proposition}
\newaliascnt{lemma}{theorem}
\newtheorem{lemma}[lemma]{Lemma}
\aliascntresetthe{lemma}
\newaliascnt{corollary}{theorem}
\newtheorem{corollary}[corollary]{Corollary}
\aliascntresetthe{corollary}
\theoremstyle{definition}
\newaliascnt{definition}{theorem}
\newtheorem{definition}[definition]{Definition}
\aliascntresetthe{definition}
\theoremstyle{definition}
\newaliascnt{example}{theorem}

\aliascntresetthe{example}
\theoremstyle{remark}
\newaliascnt{remark}{theorem}

\aliascntresetthe{remark}
\numberwithin{equation}{section}

\crefname{theorem}{theorem}{theorems}
\Crefname{theorem}{Theorem}{Theorems}
\crefname{proposition}{proposition}{propositions}
\Crefname{proposition}{Proposition}{Propositions}
\crefname{lemma}{lemma}{lemmas}
\Crefname{lemma}{Lemma}{Lemmas}
\crefname{corollary}{corollary}{corollaries}
\Crefname{corollary}{Corollary}{Corollaries}
\crefname{definition}{definition}{definitions}
\Crefname{definition}{Definition}{Definitions}
\crefname{example}{example}{examples}
\Crefname{example}{Example}{Examples}
\crefname{remark}{remark}{remarks}
\Crefname{remark}{Remark}{Remarks}
\crefname{equation}{equation}{equations}
\Crefname{equation}{Equation}{Equations}

\makeatletter
\def\l@subsection{\@tocline{2}{0pt}{3pc}{5pc}{}}
\makeatother

\begin{document}

\title{Lieb--Robinson bounds via the Combes--Thomas estimate}

\author{Tali Braester}
\author{Haochen Luo}
\author{George Peykanu}
\author{Sergii Solomakha}
\address{(T.~Braester, H.~Luo, G.~Peykanu, S.~Solomakha) Department of Mathematics, University of Washington, Seattle, WA 98195, USA}
\email{tcbrae@uw.edu, haochl4@uw.edu, gpeyka@uw.edu, ssoloma@uw.edu}
\begin{abstract}

This paper studies quantum spin systems: networks of spins coupled by local interactions. The Lieb--Robinson bound predicts that in these systems, information propagates at an essentially finite speed. We produce a new proof of this result. Our approach relies on a spectral bound on entries of resolvents of band matrices, the Combes--Thomas estimate, rather than on the standard path-counting argument. 
\end{abstract}
\maketitle

\section{Introduction}
Quantum spin systems are discrete, fundamental models of many-body quantum mechanics. While direct interactions involve only neighboring sites, cumulating them as in the Schr\"odinger propagator connects arbitrarily far sites. This raises a fundamental question:
how quickly does information propagate through quantum spin systems? 
\par This problem was first addressed for translation-invariant systems by Lieb and Robinson in 1972 \cite{LiebRobinson}, who proved that information in quantum spin systems  propagates at a speed that is bounded independently of the system size, up to an exponentially small remainder. Robinson extended the result to interactions whose strength decays exponentially and need not be translation invariant \cite{Robinson76}. The subject was significantly revived when Hastings \cite{Hastings04} used the bound to prove a multi-dimensional Lieb--Schultz--Mattis theorem. This demonstrated that Lieb--Robinson bounds can be exploited to describe physical properties of quantum materials beyond the speed of propagation.  Nachtergaele \& Sims \cite{NachtergaeleSims06} extended the bound to exponentially decaying interactions on discrete metric spaces, and Hastings \& Koma \cite{HastingsKoma} further extended it to power-law decaying interactions and obtained constants independent of the local Hilbert-space dimension. 
\par Such results, collectively known in the literature as Lieb--Robinson bounds, show that information propagates at an (essentially) finite speed. For finite-range interactions
the speed is bounded by a constant depending only on the interaction strength and on the number of neighbors.
Specifically, the influence of a spin on distant spins is exponentially small outside a ball of radius that grows linearly with time. In this paper, we provide a new spectral approach to Lieb--Robinson bounds, based on the Combes--Thomas estimate \cite{CombesThomas} instead of the standard path-counting argument of \cite{Hastings04, HastingsKoma, NachtergaeleSims06}.

\subsection{Setup and basic notation}\label{sec:setup}
We model a finite quantum spin system by a connected graph $G=(\Lambda, E)$ with at least one edge, and a Hamiltonian $H : \mathcal H \rightarrow \mathcal H$ acting on the state space of the full system, the Hilbert space
\[
\mathcal H
= 
\bigotimes_{x\in\Lambda}\mathbb C^2.
\]
We provide $G$ with the graph distance $d$: given $x,y \in \Lambda$, $d(x,y)$ is the minimum length of a path on the graph joining $x$ and $y$. The \emph{degree} $\deg(x)$ of a site $x\in\Lambda$ is the number of edges containing $x$. 

In this description, each $x\in \Lambda$ carries the two-dimensional Hilbert space $\C^2$, corresponding to a quantum bit (qubit); qubits connected by an edge directly communicate (see \Cref{fig:spin-system}). The Hamiltonian then takes the form
\begin{align}
H=\sum_{e\in E} H_e, \qquad  \text{where:}
\end{align}
\begin{itemize}
    \item $H_e : \HH \rightarrow \HH$ is a self-adjoint operator;
    \item $H_e$ is supported on the endpoints $x,y$ of $e$ (see \Cref{def:support} below for the support of an operator);
    \item the operator norm of $H_e$ satisfies $\|H_e\|\le 1$.
\end{itemize}

\begin{figure}[ht]
\centering
\begin{tikzpicture}[x=1.8cm,y=1.8cm,>={Stealth[length=4pt,width=3.5pt]}]
    \samplegraph
    \filldraw[fill=figaccent!10,draw=figaccent,line width=0.8pt]
        plot[smooth cycle,tension=0.75] coordinates {(-0.42,0) (0,0.4) (0.95,0.42) (1.7,1.3) (2.3,1.28) (2.45,0.5) (2.4,0) (2.45,-0.5) (2.3,-1.28) (1.7,-1.3) (0.95,-0.42) (0,-0.4)};
    \node[figaccent!60!black,anchor=north] at (0.45,-0.55) {$d(x,u)\le 32\DD t$};
    \draw[black!60,line width=0.9pt] \samplegraphedges;
    \node[black!80,above=1pt] at ($(p4)!.5!(p5)$) {$H_e$};
    \foreach \v/\s in {x/1,p1/-1,p2/1,p3/1,p4/-1,p5/1,y/-1,r/1} {
        \filldraw[fill=white,draw=black!70,line width=0.7pt] (\v) circle (0.16);
        \draw[->,black!80,line width=0.7pt] ($(\v)+(0,-0.1*\s)$) -- ($(\v)+(0,0.1*\s)$);
    }
    \node[black!80,right=8pt] at (r) {$\C^2$};
    \node[below=6pt] at (x) {$x$};
    \node[above=6pt,figaccent!60!black] at (x) {$A$};
    \node[right=8pt] at (y) {$y$};
    \node[above=6pt] at (y) {$B$};
\end{tikzpicture}%
\caption{A quantum spin system: a qubit $\C^2$ at each site, an interaction $H_e$ on each edge $e$, an operator $A$ supported at $x$ and $B$ at $y$. Shaded: the light cone $\{u : d(x,u)\le 32\DD t\}$ of \Cref{thm:main}, here of radius $2$, in which $A(t)$ is approximately supported.}
\label{fig:spin-system}
\end{figure}

In the Heisenberg picture, an operator $A$ evolves as $A(t) \defeq e^{iHt}A\,e^{-iHt}$. We aim to describe the extent to which an operator $A$ initially localized at $x$ affects site $y$ at time $t$ as a function of $t$ and of $d(x,y)$. We introduce the following tool:

\begin{definition}[Support of an operator]\leavevmode\phantomsection\label{def:support}%
The \emph{support} of an operator $A\colon(\C^2)^{\otimes\Lambda}\to(\C^2)^{\otimes\Lambda}$, denoted $\supp(A)$, is the smallest set $S\subseteq\Lambda$ for which
\begin{equation}\label{eq:support-def}
A = a\otimes\1_{\Lambda\setminus S},
\qquad a\colon(\C^2)^{\otimes S}\to(\C^2)^{\otimes S}.
\end{equation}
\end{definition}

 The support is not a robust notion, because it is unstable under small perturbation: if $A$ has support $S$ and $X$ has support $\Lambda$, then $A+\epsi X$ generally has support $\Lambda$ even for small $\epsi$. In practice, one measures how close an operator is to one with known support via commutators: an operator supported in $S$ commutes with every operator supported in $\Lambda \setminus S$; conversely, an operator that commutes with every operator supported in $\Lambda \setminus S$ is supported in $S$. \Cref{thm:approximate-support} below is a quantitative version of this converse. There, the commutators are merely small rather than vanishing, and the support is correspondingly approximate rather than exact.  See \cite[eqs.~(1.1)--(1.2) and Lemma~2.1]{Nachtergaele13} for a proof.

\begin{proposition}[Approximate support]\label{thm:approximate-support}
Let $\varepsilon>0$, $S \subseteq \Lambda$, and $A:(\C^2)^{\otimes\Lambda}\to(\C^2)^{\otimes\Lambda}$. If $\norm{[A,B]} \le \varepsilon$ for every $B:(\C^2)^{\otimes\Lambda}\to(\C^2)^{\otimes\Lambda}$ with $\|B\|\le 1$ and $\supp(B)\subseteq \Lambda \setminus S$, then $A$ is ``approximately'' supported on $S$; that is, there exists an operator $A'$ with $\supp(A')\subseteq S$ such that $\norm{A-A'} \le \varepsilon$.
\end{proposition}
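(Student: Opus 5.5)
The plan is to build $A'$ by averaging $A$ over the unitary group of the complement: conjugate by random unitaries supported in $\Lambda\setminus S$ and average. Write $\HH = \HH_S\otimes\HH_{S^c}$ with $\HH_{S^c}=(\C^2)^{\otimes(\Lambda\setminus S)}$. Let $\mu$ be normalized Haar measure on the unitary group $U(\HH_{S^c})$, and set
\begin{equation}
A' \defeq \int_{U(\HH_{S^c})} (\1_S\otimes U)\,A\,(\1_S\otimes U)^*\,d\mu(U).
\end{equation}
(If $S=\Lambda$ take $A'=A$; if $S=\emptyset$ the same formula gives a scalar multiple of the identity, which has empty support.) Since $\HH_{S^c}$ is finite-dimensional, one may equally use a finite unitary 1-design, such as the Pauli group on $\Lambda\setminus S$ averaged with uniform weights, to avoid measure-theoretic language.

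The first step is to show $\supp(A')\subseteq S$. By invariance of Haar measure, $A'$ commutes with every $\1_S\otimes V$ for $V$ unitary on $\HH_{S^c}$. Unitaries span all operators on $\HH_{S^c}$, so $A'$ commutes with $\1_S\otimes \mathcal B(\HH_{S^c})$. The commutant of $\1\otimes\mathcal B(\HH_{S^c})$ is $\mathcal B(\HH_S)\otimes\1$, which is the standard double-commutant fact for tensor products. To verify it, expand $A'=\sum_{j,k} a_{jk}\otimes |j\rangle\langle k|$ in a basis of $\HH_{S^c}$. Commuting with $\1\otimes|m\rangle\langle n|$ forces $a_{jk}=\delta_{jk}a$. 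Hence $A'=a\otimes\1_{\Lambda\setminus S}$.

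The second step is the estimate. For each unitary $U$ on $\HH_{S^c}$, the operator $B_U=\1_S\otimes U$ satisfies $\|B_U\|=1$ and $\supp(B_U)\subseteq\Lambda\setminus S$, so the hypothesis gives $\|[A,B_U]\|\le\varepsilon$. Then
\begin{equation}
A - B_U A B_U^* = (AB_U - B_UA)B_U^* = [A,B_U]B_U^*,
\end{equation}
and therefore $\|A-B_UAB_U^*\|\le\varepsilon$. Averaging and using the triangle inequality for the Bochner integral (or for the finite sum) gives
\begin{equation}
\|A-A'\|\le\int\|A-B_UAB_U^*\|\,d\mu(U)\le\varepsilon.
\end{equation}

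No step is a serious obstacle. The only point needing care is the commutant identification in the first step. Using the Pauli-group average (a finite sum over $4^{|\Lambda\setminus S|}$ Pauli strings) makes that step fully explicit: it is the identity $\frac14\sum_{P\in\{I,X,Y,Z\}}PMP^*=\tfrac12\operatorname{tr}(M)\,I$ applied site by site, which directly gives $A'=2^{-|\Lambda\setminus S|}\operatorname{tr}_{\Lambda\setminus S}(A)\otimes\1$. I would present it that way.
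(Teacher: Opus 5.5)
Your proof is correct, and it is essentially the standard argument: the paper gives no proof of its own and defers to \cite{Nachtergaele13}, where the approximant is likewise the average $\int (\1\otimes U)A(\1\otimes U)^*\,d\mu(U)$ over the unitary group of the complementary factor, and the estimate $\|A-A'\|\le\sup_U\|[A,\1\otimes U]\|\le\varepsilon$ is obtained as in your second step. Your Pauli-twirl version gives the normalized partial trace $2^{-|\Lambda\setminus S|}\operatorname{tr}_{\Lambda\setminus S}(A)\otimes\1$, which is an equally valid finite replacement for the Haar integral and makes the support claim immediate.
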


These observations allow us to relate the speed of information propagation to the norm of the commutator
\begin{equation}\label{commutator-def}
    \bigl\|[A(t),B]\bigr\|,
\end{equation}
where $A$ is an operator initially supported at a site $x\in \Lambda$ and $B$ is an arbitrary operator supported at some $y\in \Lambda$.

\subsection{The Lieb--Robinson bound}\label{sec:lr-bound} Suppose we have a quantum spin system, represented by a graph $G=(\Lambda, E)$ and a Hamiltonian $H$. For a fixed time $t\ge 0$, is the norm \eqref{commutator-def} small when $d(x,y)$ is large? 
Such a bound would give control on the speed of information propagation in the quantum spin system, up to a small remainder.

Lieb and Robinson \cite{LiebRobinson} were the first to give a positive answer to this question. Later works extended their result to broader classes of quantum spin systems and sharpened the form of the bound \cite{Robinson76, Hastings04, HastingsKoma, NachtergaeleSims06}; see \cite{Chen} for a survey. Following \cite[Section~5]{NachtergaeleSims}, any such decay estimate, uniform in the size of the graph, is informally called a Lieb--Robinson bound.

For the quantum spin systems described in \Cref{sec:setup}, such bounds can be brought to the form \cite[Section~3]{Chen}: if $A$ is an operator with initial support localized to a site $x\in \Lambda$, $A(t)$ is its Heisenberg-evolved operator, and $B$ is any operator supported on a site $y\in \Lambda$, then, for every $t \ge 0$,
\begin{equation}\label{eq:lr-bounds}
    \bigl\lVert [A(t), B] \bigr\rVert
      \;\le\; C_1\, \mathrm{e}^{\,C_2t - C_3d(x,y)} \|A\| \| B \|,
\end{equation}
where $C_1, C_2, C_3 \in \mathbb{R}^+$ are constants depending only on the maximal degree of the graph, and not on $|\Lambda|$ or on the operators $A$, $B$.

\par For instance, Hastings' bound \cite[Eq.~(4)]{Hastings04} and the bound of \cite[Proposition~3.5]{Chen} have a factorial form, which can be converted to the exponential form \eqref{eq:lr-bounds} by a standard estimate \cite[Example~3.6]{Chen}. The Nachtergaele--Sims bound appears in this form as Eq.~(5.6) of their lecture notes \cite{NachtergaeleSims}. Our work rederives \eqref{eq:lr-bounds} in the following form:
\begin{theorem}[Lieb--Robinson bound]\leavevmode\phantomsection\label{thm:main}%
    Let $A$ be an operator with support at a site $x\in \Lambda$ and $A(t)$ its Heisenberg-evolved operator. Then for any operator $B$ supported at a site $y\in \Lambda$ and every $t \ge 0$,
    \begin{equation}\label{eq:main-result}
    \bigl\lVert [A(t), B] \bigr\rVert
      \;\le\; 18\, \DD\, \mathrm{e}^{\,32\DD t - d(x,y)} \|A\| \|B\|,
\end{equation}
where $\DD = \max_{x \in \Lambda} \deg(x)$ is the maximal degree of the graph.
\end{theorem}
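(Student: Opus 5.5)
The plan is to recast the Heisenberg evolution as a unitary group generated by a self-adjoint band operator on operator space, and to use a Combes--Thomas estimate in place of path counting. I would equip the space of operators on $\HH$ with the normalized Hilbert--Schmidt inner product $\langle X,Y\rangle=2^{-|\Lambda|}\operatorname{tr}(X^*Y)$. The Pauli strings $\sigma=\bigotimes_u\sigma_u$ form an orthonormal basis, and $\supp(\sigma)$ is just the set of non-identity factors. The Liouvillian $\mathcal L X\defeq[H,X]=\sum_{e}[H_e,X]$ is self-adjoint in this inner product, and $A(t)=e^{it\mathcal L}A$. For each string I define the radius $r(\sigma)\defeq\max_{u\in\supp\sigma}d(x,u)$, and I let $P_k$ be the orthogonal projection onto the span of strings with $r(\sigma)=k$. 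Since $[H_e,\sigma]$ is a combination of strings that differ from $\sigma$ only at the endpoints of $e$, and vanishes unless $e$ meets $\supp\sigma$, the operator $\mathcal L$ is tridiagonal in the grading: $P_j\mathcal L P_k=0$ for $|j-k|>1$. Also $A=P_0A$, and $[A(t),B]=[(1-\Pi)A(t),B]$, where $\Pi$ projects onto strings with $r<d(x,y)$; such strings do not touch $y$. Hence
\[
\bigl\|[A(t),B]\bigr\|\le 2\|B\|\,\Bigl\|\sum_{k\ge d(x,y)}P_kA(t)\Bigr\|.
\]
This reduces the theorem to exponential decay of $P_ke^{it\mathcal L}P_0$ in $k$.

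The Combes--Thomas step is the core of the argument. Let $W_\mu=e^{\mu r}$, acting diagonally on strings, and set $\mathcal L_\mu=W_\mu\mathcal L W_\mu^{-1}=\mathcal L+K_\mu$. By the band structure, $K_\mu$ has only the blocks $(e^{\pm\mu}-1)P_{k\pm1}\mathcal L P_k$. The estimate I would aim for is $\|K_\mu\|\le c\,\DD(e^{\mu}-1)$. I would prove it by writing $K_\mu=\sum_e K_{\mu,e}$ and noting that only edges $e$ meeting the outermost shell $\{u:d(x,u)=r(\sigma)\}$ of a given string can change $r$. One would then organize these terms so that each string sees contributions bounded by a constant times $\DD$, for instance through a Schur-test bound on the matrix of $K_\mu$ in the Pauli basis.

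With this bound the rest is Duhamel's formula. The group $e^{it\mathcal L}$ is unitary, so
\[
\|W_\mu e^{it\mathcal L}W_\mu^{-1}\|=\|e^{it(\mathcal L+K_\mu)}\|\le e^{t\|K_\mu\|}.
\]
(Equivalently, one can work with the resolvent: $\|W_\mu(z-\mathcal L)^{-1}W_\mu^{-1}\|\le(|\operatorname{Im}z|-\|K_\mu\|)^{-1}$, and then represent $e^{it\mathcal L}$ by a contour integral along $\operatorname{Im}z=\pm2\|K_\mu\|$. This is the literal ``Combes--Thomas'' route the title suggests.) Either way one gets $\|P_ke^{it\mathcal L}P_0\|\le e^{c\DD(e^\mu-1)t-\mu k}$. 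Taking $\mu$ slightly larger than $1$ and summing the geometric series over $k\ge d(x,y)$ gives $e^{C\DD t-d(x,y)}$ times a constant, which is the shape of \eqref{eq:main-result}.

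The final step is converting the Hilbert--Schmidt estimates into operator norms. I would do this by the usual comparison $\|X\|_{HS}\le\|X\|$ at the input $A$, and by exploiting locality at the output. Alternatively, I would run the whole argument in the operator norm, with the weight implemented through conditional expectations onto balls around $x$ rather than through Pauli projections. The constants $18$ and $32$ presumably absorb these conversions and the specific choice of $\mu$.

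I expect the main obstacle to be the bound $\|K_\mu\|\lesssim\DD(e^\mu-1)$. Naively, the number of edges touching the outer shell of a string is not uniformly bounded. If the Schur test fails, my first fallback is to change the weight to one that changes by at most $1$ under each single-edge action, with only $O(\DD)$ edges able to change it. One candidate is $r$ measured through a fixed shortest-path structure. Another is the ``distance from $y$'' filtration $\min_{u\in\supp\sigma}d(u,y)$, which turns the problem into a one-sided band estimate near $y$, where only the $\le\DD$ edges incident to the current frontier matter.
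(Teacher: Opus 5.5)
\textbf{There is a genuine gap, and it is at the step you flagged yourself.} Your setup is correct: $\mathcal L$ is tridiagonal in the radius grading, and $[A(t),B]=[(1-\Pi)A(t),B]$. But the estimate $\|K_\mu\|\le c\,\DD(e^\mu-1)$ is false uniformly in $|\Lambda|$, so the Duhamel step collapses.

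Here is a counterexample. Take $G$ to be a binary tree rooted at $x$, so $\DD=3$, and let $S_j$ be the sphere of radius $j$ about $x$. Put $H_e=X_uX_v$ on the edges $e=\{u,v\}$ with $u\in S_k$, $v\in S_{k+1}$, and let $\tau=\prod_{u\in S_k}Z_u$. Then $r(\tau)=k$ and $\|\tau\|_{HS}=1$. Each such edge gives $[H_e,\tau]=2X_uX_v\tau$, which is a multiple of a Pauli string of radius $k+1$. These strings are pairwise distinct, because each $v$ has a unique parent. Hence $\|P_{k+1}\mathcal L\tau\|_{HS}=2|S_{k+1}|^{1/2}$, and
\[
\|K_\mu\|\ge\|P_{k+1}K_\mu P_k\|\ge 2(e^\mu-1)\,2^{(k+1)/2}.
\]
On $\mathbb Z^2$ the same construction gives growth like $\sqrt{k}$. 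The bound $e^{t\|K_\mu\|}$ therefore produces a velocity that grows with the system. The operator-norm variant with conditional expectations is worse: as linear maps the conditional expectations coincide with your Pauli projections, and the same block satisfies $\|P_{k+1}\mathcal L\tau\|=2\bigl\|\sum_e X_uX_v\bigr\|=2|S_{k+1}|$.

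Your fallback weights (radius along a shortest-path tree, or distance from $y$) have the same defect. For strings whose support meets a large sphere, unboundedly many edges can move the weight. A Schur or Gr\"onwall bound is governed by such worst-case strings and cannot exploit the structure of the particular operator $A(t)$.

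The final step is also unsupported on its own terms. The operator $\sum_{k\ge d(x,y)}P_kA(t)$ is not local, so locality at the output does not help. The only general comparison from the normalized Hilbert--Schmidt norm to the operator norm needed for $\|[A(t),B]\|$ is $\|X\|\le 2^{|\Lambda|/2}\|X\|_{HS}$, which depends on the system size.

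\textbf{The missing idea is to reduce to a bounded-degree object before invoking Combes--Thomas, instead of weighting the whole operator space.} The paper sets $F(e,t)=\sup\{\|[X,A(t)]\|:\supp(X)\subseteq e,\ \|X\|\le 1\}$. It proves $D^+F(e,t)\le 2\sum_{e'\cap e\neq\emptyset}F(e',t)$ by expanding $[X(-h),A(t)]$ and using the Jacobi identity to absorb the edges meeting $e$ into a norm-preserving unitary conjugation. Since $e^{2t\Delta}$ has nonnegative entries, a comparison argument then gives
\[
\|[A(t),B]\|\le F(e_1,t)\le 2\sum_{e_0\ni x}e^{2t\Delta}(e_1,e_0),
\]
where $\Delta$ is the adjacency matrix of the line graph. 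Everything stays in operator norm, so no Hilbert--Schmidt conversion arises. Combes--Thomas is then applied to $\Delta$ on $\ell^2(E)$, where $\|\Delta\|\le 2\DD$ and $S_\mu\le 2\DD(e^\mu-1)$ hold trivially. The stadium contour with $\rho=2S_\mu$ and the choice $\mu_\ast=\ln\bigl(d(e_1,e_0)/(8\DD t)\bigr)$ finish the proof.
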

\par For an interpretation of \eqref{eq:main-result}, note that the norm of the commutator $[A(t),B]$ decays at least exponentially when $d(x,y) > 32\DD t$. Informally, in
the language of \Cref{thm:approximate-support}, $A(t)$ is approximately
supported in the \emph{light cone} $\{u \in \Lambda : d(x,u) \le 32\DD t\}$:
information originating at $x$ propagates no faster than the velocity
$32\DD$. Equivalently, up to an exponentially small error, an operation
performed at $x$ at time $t = t_0$ cannot be detected at any time
$t > t_0$ outside the light cone
$\{u \in \Lambda : d(x,u) \le 32\DD(t - t_0)\}$. For an illustration,
see \Cref{sec:example} and \Cref{fig:spin-system}.

\par The bound \eqref{eq:main-result} is an instance of \eqref{eq:lr-bounds} with $C_1 = 18\DD$, $C_2 = 32\DD$ and $C_3 = 1$. The constants in \eqref{eq:main-result} depend on the graph only through its maximal degree $\DD$, but not through its size $|\Lambda|$. The velocity $32\DD$ is linear in the maximal degree $\DD$, as in \cite[Proposition~3.5]{Chen}, with a larger constant; as noted in \cite[Section~3.2.3]{Chen}, the linear scaling itself cannot be improved.
\par As an intermediate result, the proof produces a superexponential bound outside the light cone, similar to Hastings' factorial bound \cite[Eq.~(4)]{Hastings04}:
\begin{equation}\label{eq:factorial}
    \bigl\|[A(t),B]\bigr\|
    \;\le\; 6\DD \left(\frac{8e\DD t}{d(x,y)-1}\right)^{d(x,y)-1} \|A\|\,\|B\|
    \qquad \text{for } d(x,y) > 32\DD t + 1,
\end{equation}
from which \eqref{eq:main-result} follows by the same standard estimate \cite[Example~3.6]{Chen}. Since $(ct)^n/n! \le (ect/n)^n$, the factorial bounds mentioned above imply bounds of the form \eqref{eq:factorial}; in the case of \cite[Proposition~3.5]{Chen}, the constant is $4(\DD-1)$ in place of our $8\DD$.

\par We emphasize that the bound \eqref{eq:main-result} itself is not new, and we have made no attempt to optimize its constants. Our main contribution is a different proof: the standard path-counting argument (see, e.g., \cite{Hastings04, HastingsKoma, NachtergaeleSims06}) is replaced by a resolvent estimate, the Combes--Thomas estimate \cite{CombesThomas}, together with a contour integral. To our knowledge, this is the first time the Combes--Thomas estimate has been used to prove a Lieb--Robinson bound for a quantum spin system. On the other hand, Combes--Thomas estimates and contour integrals are  part of the classical toolbox in the analysis of single-particle systems, see e.g. \cite[Chapter~10]{AizenmanWarzel} and the references given there.
Therefore the argument presented here offers single-particle mathematical physicists a bridge to the many-body world. It is also arguably less technical than the path-counting arguments.
We outline the proof of \Cref{thm:main} in \Cref{sec:proof-strategy}.

\subsection{An example}\label{sec:example}

To illustrate \Cref{thm:main}, we simulate the Heisenberg spin chain \cite{Heisenberg28, Bethe31}. We consider an open chain of $11$ qubits at the sites
$\Lambda = \{0, 1, \dots, 10\}$, with Hamiltonian
\[
  H = \sum_{i=0}^{9} H_{(i,i+1)},
  \qquad
  H_{(i,i+1)} = \tfrac{1}{3}\bigl(X_iX_{i+1} + Y_iY_{i+1} + Z_iZ_{i+1}\bigr).
\]
Here $X_j, Y_j, Z_j$ act at site $j \in \{0, 1, \dots, 10\}$ as the
Pauli matrices
\[
  X = \begin{pmatrix} 0 & 1 \\ 1 & 0 \end{pmatrix}, \qquad
  Y = \begin{pmatrix} 0 & -\mathrm{i} \\ \mathrm{i} & 0 \end{pmatrix}, \qquad
  Z = \begin{pmatrix} 1 & 0 \\ 0 & -1 \end{pmatrix}.
\]
Note that each edge term has operator norm $\|H_{(i,i+1)}\| = 1$, and
the maximal degree of the chain is $\DD = 2$. Take $A = X_0$ (supported at $x = 0$), and $B = X_y$ (supported at a site $y \in \{1, \dots, 10\}$). 
\Cref{fig:light-cone} shows the norm $\|[A(t),B]\|$ as a function of
the time $t$ and the distance $d(x,y)$.

\begin{figure}[ht]
\centering
\includegraphics[width=\linewidth]{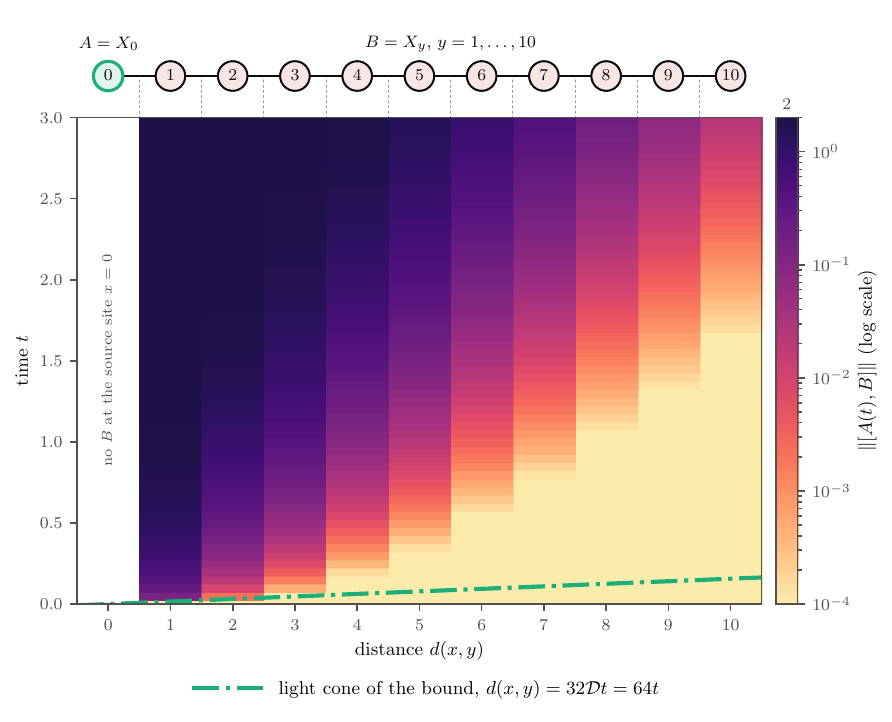}
\caption{Numerical simulation of $\|[A(t),B]\|$, via exact
diagonalization, for the Heisenberg chain of \Cref{sec:example}.
In the plot, 
values of $\|[A(t),B]\|$ below
$10^{-4}$ appear in the lightest shade.
The green line is the light cone $d(x,y) = 32\DD t = 64t$: according to 
\eqref{eq:main-result}, the
right-hand side of \eqref{eq:main-result} decays exponentially below it (i.e., outside the cone). This shows that our result is not optimal: the commutator norm can also be small above the line.
}
\label{fig:light-cone}
\end{figure}

\subsection{Proof strategy and connection to existing results}\label{sec:proof-strategy}

Our proof initially follows the approach of \cite{LiebRobinson}, which we review in \Cref{sec:2}; then in \Cref{sec:3}, we opt for spectral arguments that are, to the best of our knowledge, novel in the many-body setting. 

\par For each edge $e\in E$ we introduce
\[
F(e,t)
\defeq
\sup_{\substack{\supp(X)\subseteq e, \\ \|X\|\le 1}}
\bigl\|[X,A(t)]\bigr\|.
\]
As proved in \cite{LiebRobinson} and refined in \cite{Hastings04, HastingsKoma}, $F$ satisfies a differential inequality. We write it in compact form (see \Cref{cor:differential-inequality}) as
\begin{equation}\label{eq:diffineq}
    D^+F(e,t)\le 2\Delta F(e,t),
\end{equation}
where $D^+$ is the upper derivative in $t$, defined in \Cref{sec:2}, and $\Delta$ is the adjacency operator of the line graph of $G$ (\Cref{def:line-graph}): two edges are adjacent when they share a vertex.
A comparison argument gives
\[
F(e,t)\le e^{2t\Delta}F(e,0).
\]

Since $A$ is initially supported at $x$, the initial values $F(e,0)$ vanish unless $e$ contains $x$. The original commutator is therefore bounded by matrix entries of $e^{2t\Delta}$ between edges containing $x$ and edges containing $y$. This reduces the many-body propagation problem to an off-diagonal estimate for $e^{2t\Delta}$.

One standard way to estimate $e^{2t\Delta}$ is to expand the exponential in powers of $\Delta$. The entries of $\Delta^n$ count walks of length $n$ in the line graph, so this leads to a path-counting argument. This approach is natural and has been used extensively in the literature on Lieb--Robinson bounds \cite{Hastings04, HastingsKoma, NachtergaeleSims06}, where one counts sequences of successively overlapping interaction terms. Our approach instead uses the band structure of $\Delta$: its matrix entries vanish unless the corresponding interaction edges overlap. The Combes--Thomas estimate converts this locality into exponential off-diagonal decay for the resolvent
\[
(z-\Delta)^{-1}.
\]
A contour-integral representation of $e^{2t\Delta}$ then transfers the resolvent estimate to the matrix exponential. Optimizing the contour gives the final light-cone bound and the superexponential decay outside the cone.

The differential inequality and its reduction to a matrix exponential are closely related to existing approaches to Lieb--Robinson bounds. Papers using this technique include the aforementioned \cite{LiebRobinson, Hastings04, HastingsKoma, NachtergaeleSims06, NachtergaeleSims}. Of these, \cite{Hastings04, HastingsKoma, NachtergaeleSims06, NachtergaeleSims} then prove the Lieb--Robinson bound by the path-counting argument described above, while \cite{LiebRobinson} estimates the resulting kernel by a Fourier contour integral, relying on translation invariance. For a good pedagogical survey of such methods, one can refer to \cite{Chen}. The Combes--Thomas estimate originates in spectral theory \cite{CombesThomas} and has been used in various single-particle problems, see e.g. \cite[Chapter~10]{AizenmanWarzel}, \cite{Fischer00}, and \cite[Proposition~2.8]{KongEtAl23}. The present method combines these ideas.

\subsection{Some open questions} We suggest here a few open problems to illustrate the flexibility of our new approach.

The Combes--Thomas estimate provided in \cite{AizenmanWarzel} is stated for an \emph{arbitrary} discrete metric space
$(\Lambda,d)$ and an arbitrary self-adjoint operator on $\ell^2(\Lambda)$:
neither the graph structure of $G$ nor the finite range of the interaction
enters its proof. In our argument the finite range enters through the differential inequality \eqref{eq:diffineq} and through the bounds $\|\Delta\|\le 2\DD$ and $S_\mu\le 2\DD(e^\mu-1)$ on the line-graph operator. One
therefore expects the proof to go through once $\Delta$ is replaced by its analogue for a decaying interaction in the sense of \cite[Assumption~2.4]{HastingsKoma},
\begin{equation}
  H=\sum_{Z\subseteq\Lambda}H_Z,
  \qquad
  \sup_{x\in\Lambda}\sum_{Z\ni x}\lVert H_Z\rVert\,|Z|\,e^{\nu\operatorname{diam}Z} \le C
  \qquad \nu>0,
\end{equation} where $C> 0$ is a constant independent of $|\Lambda|$.

One could also treat Hamiltonians with algebraic decay, e.g. with interaction terms $H_{(x,y)}$ that decay like $|x-y|^{-\alpha}$ for $\alpha > d$, where $d$ is the geometric dimension of $\Lambda$ (the exponent that controls the volume growth of balls). Lieb--Robinson bounds for such interactions, with light cones depending on $\alpha$, are known \cite{FossFeig15, ChenLucas19, KuwaharaSaito20, TranEtAl21}. While the Combes--Thomas inequality does not directly apply to this case, improvements such as \cite{Sun11} do. The strategy described in our work would then likely yield an algebraic Lieb--Robinson bound for such systems.

It would likewise be interesting to look at graphs whose degree grows with $|\Lambda|$ instead of being merely bounded. Some limitations are clearly needed: for the complete graph, every site directly influences all other sites, so no meaningful Lieb--Robinson bound can hold. One avenue is to require that the average degree does not grow too fast with $|\Lambda|$. Sparse random graphs provide a natural model. For any connected subgraph of $G(n,c/n)$, where \[
    G = (V, E), \quad\text{with}\quad V = \{1, \dots, n\}, \quad\text{and}\quad \mathbb{P}(i \sim j) = \frac{c}{n}\quad \text{for } \ i \ne j \in V
\] the maximal degree is, with high probability, of order $\log n / \log\log n$ \cite[Theorem~3.4]{FriezeKaronski15} (see also \cite{FriezeRaut24}), and \Cref{thm:main} gives a velocity growing like: \[
    v = 32\cdot \frac{\log n}{\log\log n}.
\] However, the graph is locally tree-like with average degree $c$ so one might expect a size-independent velocity for a typical pair of sites. It is unclear whether the deviation from a constant velocity is a mathematical artifact or whether infrequent hubs really do accelerate propagation between typical sites. The proof of our main result, \Cref{thm:main}, uses the degree of the graph only through the bounds $\|\Delta\|\le 2\DD$ and $S_\mu\le 2\DD(e^\mu-1)$ and through $\deg(x)$ in \Cref{thm:matrix-exponential}, and treats the Combes--Thomas estimate as a black box. Providing slight adjustments to a weight or distance function on the graph based on maximal degree, then carrying the argument of the paper through, may be easier than other means of eluding the issue of maximal degree. 

\subsection*{Acknowledgments}
This paper grew out of a research project carried out as part of the Program for Research in Mathematical Exploration (PRIME) in the Department of Mathematics at the University of Washington. First and foremost, we are grateful to our mentor, Professor Alexis Drouot, for proposing this project and its spectral approach, for guidance throughout, and for detailed comments on every draft. We thank Professor Rekha Thomas and Sarah Garner, Director of Student Services, for organizing the program. Finally, we found the historical survey by Chen, Lucas, and Yin \cite{Chen} especially valuable as a guide to the methods for proving Lieb--Robinson bounds. This project benefited from NSF support through the grants DMS\nobreakdash-2439949 and DMS\nobreakdash-2054589.

\subsection*{Use of artificial intelligence} We used Claude Fable 5.1 in editing stages of this paper: to check proofs and citations against the sources, to flag typographical and consistency errors, to typeset figures and bibliography entries, and to assist in improving the wording. The mathematical ideas, the proofs, and the final text are the authors' own.

\section{Proof of the Differential Inequality}\label{sec:2}
Throughout this section, $G=(\Lambda,E)$ and $H=\sum_{e\in E}H_e$ are as in \Cref{sec:setup}, and $A$ is an operator on $\HH$ with $\|A\|\le 1$. For an operator $X$ on $\HH$ we write
\begin{equation}
    f_X(t)=\|[X,A(t)]\|,
\end{equation}
and for an edge $e\in E$ we recall from \Cref{sec:proof-strategy} the edge function
\begin{equation}
    F(e,t)=\sup_{\substack{\supp(X)\subseteq e,\\ \|X\|\le 1}} f_X(t).
\end{equation}
The quantity $F(e,t)$ measures the influence of $A(t)$ on the edge $e$: by \Cref{thm:approximate-support}, $F(e,t)\le\varepsilon$ means that $A(t)$ is within $\varepsilon$ of an operator supported in $\Lambda\setminus e$. The goal of this section is a differential inequality for $F(e,t)$ in $t$ (\Cref{lem:differential-inequality} and \Cref{cor:differential-inequality}). Since $F(e,\cdot)$ need not be differentiable, we use its \emph{upper derivative}
\begin{equation}
    D^+F(e,t) \defeq \limsup_{h\to 0^+}\frac{F(e,t+h)-F(e,t)}{h},
\end{equation}
which is finite because $F(e,\cdot)$ is Lipschitz, as we show first; for the basic properties of such Dini derivatives see \cite{HT06}.
\begin{lemma}
    \leavevmode\phantomsection\label{thm:Lipschitz}For every $e\in E$ and all $s,t\in\R$,
    \[
        |F(e,t)-F(e,s)|\le 2\,\|[H,A]\|\,|t-s| .
    \]
    In particular, $D^+F(e,t)$ is finite for every $t\in\R$.
\end{lemma}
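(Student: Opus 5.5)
The plan is to prove a Lipschitz bound for each $f_X$ that is uniform in $X$, and then pass it through the supremum. First, $A(t)-A(s)$ is controlled in norm. Since $\frac{d}{dt}A(t)=i e^{iHt}[H,A]e^{-iHt}$ and conjugation by the unitary $e^{iHt}$ preserves the operator norm, the fundamental theorem of calculus gives
\[
\|A(t)-A(s)\| = \Bigl\|\int_s^t i\,e^{iHr}[H,A]e^{-iHr}\,dr\Bigr\| \le \|[H,A]\|\,|t-s|.
\]
An equivalent route is to write $A(t)-A(s)=e^{iHs}\bigl(A(t-s)-A\bigr)e^{-iHs}$ and apply the same integral estimate on $[0,t-s]$.

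Next, fix $X$ with $\supp(X)\subseteq e$ and $\|X\|\le 1$. By the reverse triangle inequality and bilinearity of the commutator,
\[
|f_X(t)-f_X(s)| \le \|[X,A(t)-A(s)]\| \le 2\|X\|\,\|A(t)-A(s)\| \le 2\|[H,A]\|\,|t-s|.
\]
This bound is uniform over all admissible $X$. The sets $\{f_X(t)\}$ are bounded (by $2\|A\|\le 2$), so the suprema are finite. If $g_X$ and $h_X$ satisfy $|g_X-h_X|\le L$ for every $X$, then $|\sup g_X-\sup h_X|\le L$. Applying this gives $|F(e,t)-F(e,s)|\le 2\|[H,A]\||t-s|$.

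Finally, the difference quotients $\bigl(F(e,t+h)-F(e,t)\bigr)/h$ are bounded in absolute value by $2\|[H,A]\|<\infty$, since $\HH$ is finite-dimensional. Hence their $\limsup$ as $h\to0^+$ is a finite real number, and $D^+F(e,t)$ is finite.

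No step is a real obstacle. The only point needing care is the passage to the supremum: the Lipschitz constant must not depend on $X$, which is why we use the normalization $\|X\|\le 1$. We also need to justify differentiating $A(t)$. In finite dimensions $e^{iHt}$ is a norm-analytic function of $t$, so this is immediate.
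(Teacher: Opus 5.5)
Your proof is correct and follows essentially the same route as the paper. Both obtain a Lipschitz bound for $f_X$ that is uniform in $X$, using the reverse triangle inequality, bilinearity of the commutator, and $\|A(t)-A(s)\|\le\|[H,A]\|\,|t-s|$ (from the fundamental theorem of calculus and unitary invariance), then pass to the supremum and bound the difference quotients to conclude that $D^+F(e,t)$ is finite.
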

\begin{proof}
        Fix $X$ with $\supp(X)\subseteq e$ and $\|X\|\le 1$. For $s\le t$,
    \begin{equation}\label{F-lipschitz}
        \begin{aligned}
            |f_X(t) - f_X(s)| &= \big|\|[X,A(t)]\| - \|[X,A(s)]\|\big| \\
            & \le 2\|X\|\|A(t)-A(s)\| \le 2\int_s^t \|A'(u)\|du \le 2\|[iH,A]\|(t-s).
        \end{aligned}
    \end{equation}
    Here the first inequality uses the reverse triangle inequality and the linearity of the commutator, $\bigl|\|[X,A(t)]\|-\|[X,A(s)]\|\bigr|\le\|[X,A(t)-A(s)]\|\le 2\|X\|\,\|A(t)-A(s)\|$; the second is the fundamental theorem of calculus with $\|X\|\le 1$; the third uses $\|A'(u)\|=\|e^{iHu}[iH,A]e^{-iHu}\|=\|[iH,A]\|=\|[H,A]\|$ by unitary invariance. Since $|\sup_X f_X(t)-\sup_X f_X(s)|\le\sup_X|f_X(t)-f_X(s)|$, taking the supremum over $X$ in \eqref{F-lipschitz} gives $|F(e,t)-F(e,s)|\le 2\|[H,A]\|\,(t-s)$ for $s\le t$, and the case $s>t$ follows by symmetry. The difference quotients of $F(e,\cdot)$ are therefore bounded by $2\|[H,A]\|$, so $D^+F(e,t)$ is finite.
\end{proof}
We are now prepared to establish the following differential inequality on $F(e,t)$, which goes back to \cite{LiebRobinson} (see also \cite{Hastings04, HastingsKoma, NachtergaeleSims}); we include a proof for completeness.

\begin{lemma}
    \leavevmode\phantomsection\label{lem:differential-inequality}For every $t\in \R$,
     \begin{equation}
        D^+F(e,t) \le 2\sum_{e':e'\cap e\ne \emptyset} F(e',t).
    \end{equation}
\end{lemma}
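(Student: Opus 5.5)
The plan is the classical Lieb--Robinson interaction-picture argument. Fix $X$ with $\supp(X)\subseteq e$, $\|X\|\le 1$, and split $H = H_{\mathrm{in}} + H_{\mathrm{out}}$. Here $H_{\mathrm{in}} = \sum_{e'\cap e\neq\emptyset} H_{e'}$ collects the edge terms touching $e$, and $H_{\mathrm{out}}$ the rest. Every term of $H_{\mathrm{out}}$ is supported in $\Lambda\setminus e$, so $[H_{\mathrm{out}},X]=0$.

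Rather than differentiating $F$ directly, I would work at a fixed base time $t$ and small $h>0$, and write $A(t+h)=e^{iHh}A(t)e^{-iHh}$. The key step is to compare $[X,A(t+h)]$ with a conjugated copy of $[X(-h),A(t)]$. Unitary invariance gives
\[
f_X(t+h)=\bigl\|[X,e^{iHh}A(t)e^{-iHh}]\bigr\| = \bigl\|[e^{-iHh}Xe^{iHh},A(t)]\bigr\|.
\]
So it suffices to control $X_h \defeq e^{-iHh}Xe^{iHh}$. Define $\tilde X_h \defeq e^{-iH_{\mathrm{out}}h}Xe^{iH_{\mathrm{out}}h} = X$, which holds since $X$ commutes with $H_{\mathrm{out}}$. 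The standard trick is to write
\[
X_h - X = e^{-iHh}\bigl(X - e^{iHh}e^{-iH_{\mathrm{out}}h}Xe^{iH_{\mathrm{out}}h}e^{-iHh}\bigr)e^{iHh}
\]
and estimate to first order. More simply, I would use the Taylor expansion $X_h = X - ih[H,X] + O(h^2)$, where the $O(h^2)$ term has norm at most $h^2\|H\|^2\|X\|$ (finite system). Since $[H,X]=[H_{\mathrm{in}},X]=\sum_{e'\cap e\ne\emptyset}[H_{e'},X]$, we get
\[
f_X(t+h)\le \|[X,A(t)]\| + h\sum_{e'\cap e\neq\emptyset}\bigl\|[[H_{e'},X],A(t)]\bigr\| + O(h^2).
\]

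Next, each operator $[H_{e'},X]$ is supported in $e\cup e'$, not in a single edge, and has norm at most $2$. This is where some care is needed. Expanding $[[H_{e'},X],A(t)] = [H_{e'},[X,A(t)]] - [X,[H_{e'},A(t)]]$ by Jacobi does not directly give the desired form. The cleaner route, following Lieb--Robinson/Nachtergaele--Sims, is the interaction picture. Set $U(h)=e^{iHh}e^{-iH_{\mathrm{out}}h}$ and $g(h)=[X,\,e^{iH_{\mathrm{out}}h}\cdots]$. Differentiating $\Phi(s)=[X, e^{iH_{\mathrm{in}}s}\!\cdot\!]$ and using that $e^{\pm iH_{\mathrm{out}}s}$ fixes $X$, one obtains the norm-preserving part, which contributes nothing to growth, plus the inhomogeneous term $\sum_{e'}[[X,A(t+s)],H_{e'}]$-type contributions. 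The identity $[X,[H_{e'},A]]$ yields $\|[X,[H_{e'},A]]\|\le 2\|X\|\,\|[H_{e'},A]\|\le 2\,\|[H_{e'},A(t)]\|$. Moreover, $\|[H_{e'},A(t)]\|\le F(e',t)$ because $\supp(H_{e'})\subseteq e'$ and $\|H_{e'}\|\le 1$. The unitary (homogeneous) part of the evolution, generated by $i[H_{\mathrm{in}},\cdot]$ acting on the commutator, preserves norm and so drops out.

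Concretely, the plan is to establish the one-step inequality
\[
f_X(t+h)\le f_X(t) + 2h\sum_{e'\cap e\ne\emptyset}F(e',t) + o(h)
\]
uniformly in admissible $X$. The uniformity comes from the remainder bounds depending only on $\|H\|,\|A\|$. Taking the supremum over $X$ then gives $F(e,t+h)\le F(e,t)+2h\sum F(e',t)+o(h)$, and dividing by $h$ and taking $\limsup$ yields the claimed bound on $D^+F(e,t)$.

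The main obstacle is isolating the norm-preserving part so that only terms $[X,[H_{e'},A(t)]]$ survive, rather than $[[H_{e'},X],A(t)]$, whose support spans two edges. I would handle this with the derivative of $s\mapsto e^{-iH_{\mathrm{in}}s}[X,\,e^{iHs}A(t)e^{-iHs}]\,e^{iH_{\mathrm{in}}s}$. This derivative equals $-i\,e^{-iH_{\mathrm{in}}s}[X,[H_{\mathrm{in}},\cdot]]$-terms after the Jacobi identity cancels $[H_{\mathrm{in}},[X,\cdot]]$ against part of it, leaving exactly $\sum_{e'}[[H_{e'},X],\cdot]$ evaluated at $s=0$. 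I would then reorganize this term via Jacobi into commutators of $X$ with $[H_{e'},A(t)]$ plus a norm-preserving piece.
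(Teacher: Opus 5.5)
Your skeleton is the paper's: pass to $[X_h,A(t)]$ with $X_h=e^{-iHh}Xe^{iHh}$, Taylor-expand, keep only edges meeting $e$, split by the Jacobi identity, bound $\|[X,[H_{e'},A(t)]]\|\le 2F(e',t)$, take the supremum over $X$, and let $h\to0^+$. The gap is the one step that actually carries the lemma: why the term $-ih[H_{\mathrm{in}},[X,A(t)]]$ costs nothing. You only assert this, and the mechanism you commit to does not deliver it. For $\Psi(s)=e^{-iH_{\mathrm{in}}s}[X,A(t+s)]e^{iH_{\mathrm{in}}s}$, using $[X,H_{\mathrm{out}}]=0$, one finds
\[
\Psi'(s)=e^{-iH_{\mathrm{in}}s}\Bigl(i\bigl[X,[H_{\mathrm{in}},A(t+s)]\bigr]+i\bigl[H_{\mathrm{out}}-H_{\mathrm{in}},[X,A(t+s)]\bigr]\Bigr)e^{iH_{\mathrm{in}}s}.
\]
So $[H_{\mathrm{in}},[X,\cdot]]$ is not cancelled, and an $H_{\mathrm{out}}$ term appears. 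Going back to $\sum_{e'}[[H_{e'},X],A(t)]$ and ``reorganizing via Jacobi'' is circular with the route you had just set aside. If you then take norms term by term, as in your first displayed inequality, you pay $\|[H_{\mathrm{in}},[X,A(t)]]\|\le 2\|H_{\mathrm{in}}\|F(e,t)$. You would obtain only $D^+F(e,t)\le 2\|H_{\mathrm{in}}\|F(e,t)+2\sum_{e'\cap e\neq\emptyset}F(e',t)$, which is not the stated inequality.

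The missing idea is the elementary fact that, for self-adjoint $K$,
\[
Y-ih[K,Y]=e^{-ihK}Ye^{ihK}+R,\qquad \|R\|\le \tfrac{h^2}{2}\bigl\|[K,[K,Y]]\bigr\|.
\]
Hence adding a first-order term $-ih[K,Y]$ changes $\|Y\|$ only at order $h^2$; this is also the only reason the extra $H_{\mathrm{out}}-H_{\mathrm{in}}$ term above would be harmless at $s=0$. The paper applies this with $K=H_{\mathrm{in}}$ and $Y=[X,A(t)]$, which turns your Jacobi split into
\[
[X_h,A(t)]=e^{-ihH_{\mathrm{in}}}[X,A(t)]e^{ihH_{\mathrm{in}}}+ih\sum_{e'\cap e\neq\emptyset}\bigl[X,[H_{e'},A(t)]\bigr]+O(h^2),
\]
with the $O(h^2)$ uniform in $X$. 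Taking norms only at this point gives your one-step inequality, so the Jacobi route you dismissed does work once this identity is in hand. Alternatively, the interaction picture you began with works if you conjugate by $H_{\mathrm{out}}$ rather than $H_{\mathrm{in}}$. Since $e^{-iH_{\mathrm{out}}s}Xe^{iH_{\mathrm{out}}s}=X$, one has $\|[X,A(t+h)]\|=\|[X,B(h)]\|$ with $B(s)=e^{-iH_{\mathrm{out}}s}A(t+s)e^{iH_{\mathrm{out}}s}$, and $B'(s)=e^{-iH_{\mathrm{out}}s}\,i[H_{\mathrm{in}},A(t+s)]\,e^{iH_{\mathrm{out}}s}$. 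Hence $F(e,t+h)\le F(e,t)+2\int_0^h\sum_{e'\cap e\neq\emptyset}F(e',t+s)\,ds$ with no remainder at all, and the Lipschitz continuity of $F(e',\cdot)$ finishes the proof. This version avoids the $O(h^2)$ bookkeeping of the paper's Taylor-expansion argument.
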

\begin{proof}
    We proceed in two steps. First, for a fixed operator $X$, we conjugate $[X,A(t+h)]$ into $[X(-h),A(t)]$ and write the latter as a unitary conjugate of $[X,A(t)]$, plus a first-order term involving the edges neighboring $e$, plus remainders of order $h^2$. Second, taking norms gives an inequality for the increment $\|[X,A(t+h)]\|-\|[X,A(t)]\|$ whose right-hand side does not depend on $X$, so we may take the supremum over $X$; letting $h\to 0^+$ then yields the upper derivative.

    \medskip\noindent\textit{Expansion of $[X,A(t+h)]$ for fixed $X$.} Fix $h>0$ and an operator $X$ with $\|X\|\le 1$ and $\supp(X)\subseteq e$. Since the norm is invariant under unitary conjugation,
    \begin{equation}
        \|[X,A(t+h)]\|=\|e^{-ih H}[X,A(t+h)]e^{ih H}\|=\|[X(-h),A(t)]\|,
        \qquad X(u)\defeq e^{iuH}Xe^{-iuH}.
    \end{equation}
    Expanding $X(u)$ by its Taylor series about $u=0$ gives
    \begin{equation}\label{eq:X-expansion}
        X(-h)= X- ih [H,X]+R_1,
        \qquad
        R_1\defeq \int_0^{-h}(-h-u)X''(u)\,du .
    \end{equation}
    The remainder $R_1$ is of order $h^2$, with a constant independent of $X$:
    \begin{equation}\label{remainder-bound}
        \begin{aligned}
            \|R_1\| &\le \int_{-h}^{0}|-h-u|\,\|X''(u)\|\,du\\
            &\le\frac{h^2}{2}\sup_{u\in [-h,0]}\|X''(u)\|=\frac{h^2}{2}\|[H,[H,X]]\|\le 2\|H\|^2h^2 ,
        \end{aligned}
    \end{equation}
    where the equality uses $X''(u)=-e^{iHu}[H,[H,X]]e^{-iHu}$ and the last inequality uses $\|X\|\le 1$.

    We next examine the commutator $[H,X]$ in the first-order term. Since $\supp(X)\subseteq e$ and operators with disjoint supports commute, only the edges meeting $e$ contribute to it. We henceforth write
    \begin{equation}
        \mathcal N \defeq \{e'\in E : e'\cap e\neq\emptyset\}, \qquad H_{\mathcal N}\defeq\sum_{e'\in\mathcal N}H_{e'},
    \end{equation}
    so that
    \begin{equation}
        [H,X]=\Big[\sum_{e'\in E}H_{e'},X\Big]=\sum_{e'\in E}[H_{e'},X]=\sum_{e'\in\mathcal N}[H_{e'},X] .
    \end{equation}
    Substituting this into \eqref{eq:X-expansion}, we may rewrite the commutator $[X(-h),A(t)]$ as
    \begin{equation}\label{eq:recast}
        [X(-h),A(t)] = [X,A(t)]-ih\sum_{e'\in\mathcal N}[[H_{e'},X],A(t)]+[R_1,A(t)] .
    \end{equation}
    Applying the Jacobi identity $[[H_{e'},X],A(t)]=[H_{e'},[X,A(t)]]-[X,[H_{e'},A(t)]]$ to each term of the sum yields
    \begin{equation}\label{eq:jacobi-split}
        -ih \sum_{e'\in\mathcal N} [[H_{e'},X],A(t)] = -ih\, [H_{\mathcal N},[X,A(t)]] + ih \sum_{e'\in\mathcal N} [X,[H_{e'},A(t)]] .
    \end{equation}
    Now observe that the first term on the right is the first-order term of the unitary conjugation $u\mapsto e^{-i u H_{\mathcal N}}[X,A(t)] e^{i u H_{\mathcal N}}$; expanding it about $u=0$ as in \eqref{eq:X-expansion} gives
    \begin{equation}\label{eq:conjugation-expansion}
        \begin{aligned}
        e^{-ih H_{\mathcal N}}[X,A(t)] e^{ih H_{\mathcal N}}&=[X,A(t)]-ih[H_{\mathcal N},[X,A(t)]]+R_2,\\
        R_2&\defeq \int_0^{h}(h-u)\frac{d^{2}}{du^{2}}\Big(e^{-iuH_{\mathcal N}}[X,A(t)]e^{iuH_{\mathcal N}}\Big)du .
        \end{aligned}
    \end{equation}
    Substituting \eqref{eq:jacobi-split} into \eqref{eq:recast} and using \eqref{eq:conjugation-expansion} to replace the first two terms produces
    \begin{equation}\label{eq:combined}
        [X(-h), A(t)] = e^{-ih H_{\mathcal N}}[X, A(t)]e^{ih H_{\mathcal N}} + ih \sum_{e' \in \mathcal N} [X, [H_{e'}, A(t)]] + [R_1, A(t)] - R_2 .
    \end{equation}

    \medskip\noindent\textit{Uniform bound and the supremum.} Note that each term in the middle sum of \eqref{eq:combined} satisfies
    \begin{equation}
        \|[X, [H_{e'}, A(t)]]\|\le 2\|X\|\|[H_{e'}, A(t)]\|\le 2\|[H_{e'}, A(t)]\|,
    \end{equation}
    and that, by \eqref{remainder-bound} and the same argument for $R_2$ (using $\|[X,A(t)]\|\le 2$),
    \begin{equation}
        \|[R_1, A(t)]\| \le 2\|R_1\| \le 4\|H\|^2h^2,
        \qquad
        \|R_2\| \le \frac{h^2}{2}\,\bigl\|[H_{\mathcal N}, [H_{\mathcal N}, [X, A(t)]]]\bigr\| \le 4\|H_{\mathcal N}\|^2h^2 .
    \end{equation}
    Taking norms in \eqref{eq:combined}, applying the triangle inequality, and using once more that the norm is invariant under unitary conjugation, it follows that
    \begin{equation}\label{eq:1i}
        \|[X, A(t+h)]\| \le \|[X, A(t)]\| + 2h \sum_{e' \in \mathcal N} \|[H_{e'}, A(t)]\| + Ch^2,
        \qquad C\defeq 4\|H\|^2+4\|H_{\mathcal N}\|^2 .
    \end{equation}
    We now establish a bound on the right-hand side of the inequality independent of $X$. Trivially, $\|[X, A(t)]\|\le F(e,t)$. Further, for each term with $H_{e'}\ne 0$ in the sum, ${H_{e'}}/{\|H_{e'}\|}$ has norm bounded by $1$ and support contained in $e'$, so
    \begin{equation}
        \|[H_{e'},A(t)]\|=\|H_{e'}\|\,\|[H_{e'}/\|H_{e'}\|,A(t)]\|\le \|H_{e'}\|F(e',t)\le F(e',t) .
    \end{equation}
    Since the right-hand side no longer depends on $X$, we may take the supremum over $X$ satisfying $\|X\|\le 1$, $\supp(X)\subseteq e$ to produce
    \begin{equation}\label{eq:1j}
        F(e, t + h) \le F(e, t) + 2h \sum_{e' \in \mathcal N} F(e', t) + Ch^2 .
    \end{equation}
    Subtracting $F(e,t)$ from both sides, dividing by $h>0$ and letting $h\to 0^+$, we obtain
    \begin{equation}\label{eq:F-differential-inequality}
        D^+F(e,t) \le 2\sum_{e'\in\mathcal N} F(e',t),
    \end{equation}
    where the upper derivative is the appropriate notion of derivative, as explained at the beginning of this section, and is finite by \Cref{thm:Lipschitz}.
\end{proof}

The right-hand side of the inequality in \Cref{lem:differential-inequality} couples an edge only to the edges with which it shares a vertex. This is the adjacency relation of the line graph of $G$, which we now make explicit (see \Cref{fig:line-graph}).

\begin{figure}[ht]
\centering
\begin{tikzpicture}[x=1.25cm,y=1.25cm]
    \samplegraph
    \draw[black!70,line width=0.8pt] \samplegraphedges;
    \draw[figaccent,line width=2.4pt,line cap=round] (p2)--(p4) (p3)--(p4) (p5)--(y) (p5)--(r);
    \draw[figaccent!60!black,line width=2.4pt,line cap=round] (p4)--(p5);
    \foreach \v in {x,p1,p2,p3,p4,p5,y,r} \fill (\v) circle (1.7pt);
    \node[figaccent!60!black,above=1pt] at ($(p4)!.5!(p5)$) {$e$};
    \node[left=1pt] at (x) {$x$};
    \node[right=1pt] at (y) {$y$};
    \node at (2.45,-1.45) {$G$};
    \begin{scope}[shift={(6.2,0)}]
    \samplegraph
    \draw[black!15,line width=0.8pt] \samplegraphedges;
    \foreach \v in {x,p1,p2,p3,p4,p5,y,r} \fill[black!20] (\v) circle (1.3pt);
    \node[black!40,left=1pt] at (x) {$x$};
    \node[black!40,right=1pt] at (y) {$y$};
    \foreach \a/\b [count=\i] in {x/p1,p1/p2,p1/p3,p2/p4,p3/p4,p4/p5,p5/y,p5/r}
        \coordinate (E\i) at ($(\a)!.5!(\b)$);
    \draw[black!70,line width=0.8pt] (E1)--(E2) (E1)--(E3) (E2)--(E3) (E2)--(E4) (E3)--(E5)
        (E4)--(E5) (E4)--(E6) (E5)--(E6) (E6)--(E7) (E6)--(E8) (E7)--(E8);
    \draw[figaccent,line width=2.2pt,line cap=round,line join=round] (E1)--(E2)--(E4)--(E6)--(E7);
    \foreach \i in {3,5,8} \fill (E\i) circle (2pt);
    \foreach \i in {1,2,4,6,7} \fill[figaccent!60!black] (E\i) circle (2.4pt);
    \node[figaccent!60!black,below=2pt] at (E1) {$e_0$};
    \node[figaccent!60!black,above=2pt] at (E7) {$e_1$};
    \node at (2.45,-1.45) {$L(G)$};
    \end{scope}
\end{tikzpicture}%
\caption{Left: a graph $G$ with $\DD=3$; highlighted, the set $\mathcal N$ of edges sharing a vertex with $e$ (dark). Right: its line graph $L(G)$, each vertex drawn at the midpoint of the corresponding edge of $G$ (gray). A shortest path from $e_0\ni x$ to $e_1\ni y$ has length $d(e_0,e_1)=4=d(x,y)-1$.}
\label{fig:line-graph}
\end{figure}

\begin{definition}[Line graph and its adjacency operator]\leavevmode\phantomsection\label{def:line-graph}%
Two edges $e,e'\in E$ are \emph{adjacent}, written $e\sim e'$, if they share a vertex, that is, if $e\cap e'\neq\emptyset$; in particular $e\sim e$. The \emph{line graph} $L(G)$ of $G$ is the set $E$ equipped with this adjacency relation. We write $\deg_{L(G)}(e)$ for the number of edges other than $e$ that are adjacent to $e$, and $d(e,e')$ for the \emph{distance} between $e$ and $e'$ in $L(G)$, that is, the smallest $n\ge 0$ for which there are edges $e=e_0,e_1,\dots,e_n=e'$ with $e_{j-1}\sim e_j$ for every $j$. The \emph{adjacency operator} $\Delta$ of $L(G)$ acts on functions $g\colon E\to\R$ by
    \begin{equation}\label{eq:line-graph}
      (\Delta g)(e) = \sum_{e'\sim e} g(e') = \sum_{e'\,:\,e\cap e'\neq\emptyset} g(e').
    \end{equation}
\end{definition}
The matrix of $\Delta$ is real and symmetric, with entries $\Delta(e,e')=1$ if $e\sim e'$ and $0$ otherwise; in particular, all entries of $\Delta$ are nonnegative, and the same formula defines a self-adjoint operator on $\ell^2(E)$. Since $G$ is connected, $d$ is a metric on $E$. With this notation, \Cref{lem:differential-inequality} takes the following compact form.

\begin{corollary}\label{cor:differential-inequality}
For every $t\in\R$ and every $e\in E$,
    \begin{equation}
         D^+F(e, t) \leq \big(2\Delta F(\cdot,t)\big) (e),
    \end{equation}
where $\Delta$ acts on the edge variable of $F(\cdot,t)$.
\end{corollary}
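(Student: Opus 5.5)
This corollary is essentially a rewriting of \Cref{lem:differential-inequality} in the language of \Cref{def:line-graph}, so the plan is simply to match the two right-hand sides. Fix $t\in\R$ and $e\in E$. The lemma gives
\[
D^+F(e,t)\le 2\sum_{e'\,:\,e'\cap e\neq\emptyset}F(e',t).
\]
Evaluating the adjacency operator from \eqref{eq:line-graph} at $g=F(\cdot,t)$, viewed as a function $E\to\R$, gives
\[
\big(\Delta F(\cdot,t)\big)(e)=\sum_{e'\,:\,e\cap e'\neq\emptyset}F(e',t).
\]
This is exactly the sum in the lemma. Multiplying by $2$, which commutes with the linear operator $\Delta$, yields $D^+F(e,t)\le\big(2\Delta F(\cdot,t)\big)(e)$.

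The only points to check concern bookkeeping, and I expect no genuine obstacle. First, the index sets must coincide. The lemma sums over $\mathcal N=\{e'\in E: e'\cap e\neq\emptyset\}$, which contains $e$ itself. The convention $e\sim e$ in \Cref{def:line-graph} ensures that $\Delta$ has diagonal entries equal to $1$, so the term $e'=e$ is included on both sides. Second, $F(\cdot,t)$ must be a legitimate real-valued function on the finite set $E$. It is: each $F(e',t)$ is a supremum of norms $\|[X,A(t)]\|\le 2\|A\|$, so it is finite and nonnegative. Finally, $D^+F(e,t)$ is finite by \Cref{thm:Lipschitz}, so the inequality compares real numbers.
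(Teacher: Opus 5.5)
Your proposal is correct and matches the paper's proof, which likewise observes that the sum in \Cref{lem:differential-inequality} runs over the edges $e'\sim e$ and therefore equals $2\big(\Delta F(\cdot,t)\big)(e)$ by \eqref{eq:line-graph}. Your extra checks (that the convention $e\sim e$ accounts for the diagonal term, and that $F$ and $D^+F$ are finite) are accurate; they make explicit what the paper leaves implicit.
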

\begin{proof}
The sum in \Cref{lem:differential-inequality} is over the edges $e'\sim e$, so by \eqref{eq:line-graph} its right-hand side is $2(\Delta F(\cdot,t))(e)$.
\end{proof}

  \begin{theorem}
    \leavevmode\phantomsection\label{thm:matrix-exponential}Let $x,y\in\Lambda$. Suppose $A$ and $B$ are operators satisfying
    \[
    \|A\|\le 1,\qquad \operatorname{supp}(A)\subseteq \{x\},
    \quad \text{and} \quad
    \|B\|\le 1,\qquad \operatorname{supp}(B)\subseteq \{y\}.
    \]
    Define
    \[
    f(t):=\|[B,A(t)]\|.
    \]
    Then, for every $t\ge 0$ and every edge $e_1\in E$ containing $y$,
    \[
    f(t)=\|[A(t),B]\|
    \le F(e_1,t)
    \le
    2\sum_{\substack{e_0\in E\\ x\in e_0}}
    \left(e^{2t\Delta}\right)(e_1,e_0)
    \le 2\deg(x)\max_{\substack{e_0\in E\\ x\in e_0}}\left(e^{2t\Delta}\right)(e_1,e_0).
    \]
  \end{theorem}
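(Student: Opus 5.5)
The three inequalities in \Cref{thm:matrix-exponential} will be proved in order.

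The first one, $f(t)\le F(e_1,t)$, is immediate. Since $\supp(B)\subseteq\{y\}\subseteq e_1$ and $\|B\|\le 1$, the operator $B$ is admissible in the supremum defining $F(e_1,t)$. Also $\|[A(t),B]\|=\|[B,A(t)]\|$.

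The last inequality is also immediate. The sum has exactly $\deg(x)$ terms, one for each edge $e_0\ni x$. Each term is bounded by the maximum, and all entries of $e^{2t\Delta}$ are nonnegative because $\Delta$ has nonnegative entries.

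The real content is the middle inequality. I would establish the comparison principle
\[
F(\cdot,t)\le e^{2t\Delta}F(\cdot,0)\quad\text{entrywise for } t\ge 0,
\]
starting from \Cref{cor:differential-inequality}, and then evaluate $F(\cdot,0)$. At time zero, $F(e,0)=\sup\|[X,A]\|$. If $x\notin e$, every admissible $X$ has support disjoint from $\supp(A)$, so $F(e,0)=0$. If $x\in e$, we have the trivial bound $F(e,0)\le 2\|X\|\|A\|\le 2$. Hence $F(\cdot,0)\le 2\cdot\1_{\{e:\,x\in e\}}$. Since $e^{2t\Delta}$ is entrywise nonnegative, this gives
\[
F(e_1,t)\le\sum_{e_0}(e^{2t\Delta})(e_1,e_0)F(e_0,0)\le 2\sum_{e_0\ni x}(e^{2t\Delta})(e_1,e_0).
\]

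The main obstacle is the comparison principle, because $F$ is only Lipschitz (\Cref{thm:Lipschitz}) and the inequality involves only Dini derivatives. I would argue as follows. Fix $T>0$ and $\epsi>0$, and set $G_\epsi(t)\defeq e^{2t\Delta}(F(\cdot,0)+\epsi\1)$. This function is smooth and satisfies $G_\epsi'=2\Delta G_\epsi$. Now consider $w(t)\defeq F(\cdot,t)-G_\epsi(t)$, so that $w(0)=-\epsi\1<0$ entrywise.

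Suppose some entry of $w$ becomes nonnegative on $[0,T]$, and let $t_*$ be the first such time, which exists by continuity. At $t_*$ we have $w(e,t_*)=0$ for some edge $e$, and $w(\cdot,t_*)\le 0$ entrywise. Since $\Delta$ has nonnegative entries,
\[
D^+w(e,t_*)\le 2(\Delta w(\cdot,t_*))(e)\le 0.
\]
This alone is not a contradiction, because the inequality is not strict. To make it strict I would instead use $G_\epsi(t)\defeq e^{2t\Delta}F(\cdot,0)+\epsi e^{\lambda t}\1$ with $\lambda>2\|\Delta\|_{\infty\to\infty}$. Then $G_\epsi'(e)>2(\Delta G_\epsi)(e)$, which forces $D^+w(e,t_*)<0$. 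But $w(e,s)<0$ for $s<t_*$ and $w(e,t_*)=0$ imply that the left difference quotient at $t_*$ is nonnegative.

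Because of this, I would apply the argument on the left rather than the right, or equivalently use the standard fact from \cite{HT06} that a continuous function with $D^+w\le 0$ is nonincreasing. The cleanest route is to set $\phi(t)\defeq\max_e w(e,t)$. Using $D^+w(e,t)<0$ whenever $w(e,t)=\phi(t)\ge 0$, one gets $D^+\phi<0$ wherever $\phi\ge 0$. The Dini monotonicity lemma then gives $\phi<0$ on $[0,T]$. Letting $\epsi\to 0$ yields the comparison principle, which completes the middle inequality.
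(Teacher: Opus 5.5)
Your barrier and maximum-principle route is a legitimate alternative to the paper's argument. Your treatment of the first and last inequalities and of $F(\cdot,0)$ matches the paper. However, the decisive step is not justified as written.

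\textbf{The gap.} You assert that $D^+w(e,t)<0$ whenever $w(e,t)=\phi(t)\ge 0$. What your computation actually gives is
\[
D^+w(e,t)\le 2\bigl(\Delta w(\cdot,t)\bigr)(e)-\varepsilon e^{\lambda t}\bigl(\lambda-2(\Delta\1)(e)\bigr).
\]
You can conclude $(\Delta w(\cdot,t))(e)\le 0$ only when $w(\cdot,t)\le 0$ entrywise, i.e.\ when $\phi(t)=0$. If $\phi(t)>0$, then $(\Delta w(\cdot,t))(e)$ is only bounded by $(1+\deg_{L(G)}(e))\,\phi(t)>0$. The surplus $\varepsilon e^{\lambda t}(\lambda-2(\Delta\1)(e))$ has no reason to dominate this. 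So you cannot apply the Dini monotonicity lemma on an interval where $\phi>0$.

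Also, even granting your hypothesis, right Dini derivatives give only $\phi\le 0$, not $\phi<0$. For example, $s\mapsto-|s-1|$ touches $0$ with negative right derivative. Fortunately, $\phi\le 0$ is all you need.

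\textbf{The repair.} Use the strict inequality only where it is available, namely at zeros of $\phi$. If $\phi(t)=0$, then $w(\cdot,t)\le 0$, so every edge with $w(e,t)=0$ has $D^+w(e,t)<0$. Since $E$ is finite, for small $h>0$ the maximum defining $\phi(t+h)$ is attained among these edges, whence $D^+\phi(t)<0$.

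Now suppose $\phi(t_1)>0$ for some $t_1\in[0,T]$, and put $t_0=\sup\{s\in[0,t_1]:\phi(s)\le 0\}$. Continuity and $\phi(0)=-\varepsilon$ give $\phi(t_0)=0$ and $\phi>0$ on $(t_0,t_1]$. This contradicts $D^+\phi(t_0)<0$. Hence $F(\cdot,t)\le e^{2t\Delta}F(\cdot,0)+\varepsilon e^{\lambda t}\1$ on $[0,T]$, and letting $\varepsilon\to 0$ concludes.

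Taking the \emph{last} time $\phi\le 0$ before it turns positive, rather than the first time it reaches $0$, is what resolves the right-versus-left derivative issue you noticed. Your other option, arguing on the left at the first touching time, would need a bound on the left upper derivative. \Cref{cor:differential-inequality} does not state this, though it follows from \eqref{eq:1j} by continuity of $F$.

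\textbf{Comparison with the paper.} With this fix, your argument works pointwise with $D^+$ and needs only continuity of $F$. The paper instead uses the Lipschitz bound of \Cref{thm:Lipschitz} to get absolute continuity and a.e.\ differentiability. It then integrates $\frac{d}{ds}\bigl(e^{2(t-s)\Delta}F(\cdot,s)\bigr)\le 0$, using entrywise positivity of $e^{2(t-s)\Delta}$. That avoids the $\varepsilon$-barrier and the one-sided bookkeeping altogether.
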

 \begin{proof}
    By \Cref{thm:Lipschitz}, for each $e\in E$, the function $F(e,\cdot)$ is
    Lipschitz, hence absolutely continuous and differentiable almost everywhere.
    At every point $s$ where $F(e,\cdot)$ is differentiable, its ordinary
    derivative agrees with its upper derivative \cite[p.~34]{HT06}. Therefore, by \Cref{cor:differential-inequality},
    \[
    F'(e,s)\le (2\Delta F(\cdot,s))(e)
    \]
    for almost every $s$.
    
    Fix $t\ge 0$. For $0\le s\le t$, define
    \[
    G(s):=e^{2(t-s)\Delta}F(\cdot,s).
    \]
    Since each component of $F(\cdot,s)$ is absolutely continuous and
    $s\mapsto e^{2(t-s)\Delta}$ is smooth, $G$ is absolutely continuous.
    Thus, for almost every $s\in[0,t]$,
    \[
    \begin{aligned}
    G'(s)
    &=
    -2\Delta e^{2(t-s)\Delta}F(\cdot,s)
    +
    e^{2(t-s)\Delta}F'(\cdot,s)\\
    &=
    e^{2(t-s)\Delta}
    \left(F'(\cdot,s)-2\Delta F(\cdot,s)\right),
    \end{aligned}
    \]
    where we used that $\Delta$ commutes with $e^{2(t-s)\Delta}$.
    
    By \Cref{cor:differential-inequality},
    \[
    F'(\cdot,s)-2\Delta F(\cdot,s)\le 0
    \]
    componentwise for almost every $s$. Moreover, $\Delta$ has nonnegative
    matrix entries, so
    \[
    e^{2(t-s)\Delta}
    =
    \sum_{n=0}^{\infty}
    \frac{(2(t-s)\Delta)^n}{n!}
    \]
    also has nonnegative matrix entries for $0\le s\le t$. Hence
    \[
    G'(s)\le 0
    \]
    componentwise for almost every $s$. Since $G$ is absolutely continuous,
    \[
    G(t)-G(0)=\int_0^t G'(s)\,ds\le 0.
    \]
    Therefore,
    \[
    F(\cdot,t)\le e^{2t\Delta}F(\cdot,0).
    \]
    
    Taking the component corresponding to an edge $e_1\in E$ containing $y$,
    we obtain
    \[
    F(e_1,t)
    \le
    \sum_{e_0\in E}
    \left(e^{2t\Delta}\right)(e_1,e_0)F(e_0,0).
    \]
    
    Since $A$ is supported at $x$, if $x\notin e_0$, then every operator
    supported on $e_0$ has support disjoint from that of $A$, and hence
    commutes with $A$. Thus
    \[
    F(e_0,0)=0
    \qquad\text{whenever }x\notin e_0.
    \]
    Therefore,
    \[
    F(e_1,t)
    \le
    \sum_{\substack{e_0\in E\\x\in e_0}}
    \left(e^{2t\Delta}\right)(e_1,e_0)F(e_0,0).
    \]
    For every $e_0$ containing $x$,
    \[
    F(e_0,0)
    =
    \sup_{\substack{\operatorname{supp}(X)\subseteq e_0\\ \|X\|\le1}}
    \|[X,A]\|
    \le 2\|A\|
    \le 2,
    \]
    so
    \[
    F(e_1,t)
    \le
    2\sum_{\substack{e_0\in E\\x\in e_0}}
    \left(e^{2t\Delta}\right)(e_1,e_0).
    \]
    
    Finally, since $B$ is supported at $y$ and $y\in e_1$, while $\|B\|\le1$,
    the operator $B$ is admissible in the definition of $F(e_1,t)$. Hence
    \[
    f(t)
    =
    \|[B,A(t)]\|
    =
    \|[A(t),B]\|
    \le
    F(e_1,t).
    \]
    Since there are exactly $\deg(x)$ edges containing $x$, and all entries
    of $e^{2t\Delta}$ are nonnegative,
    \[
    \sum_{\substack{e_0\in E\\x\in e_0}}
    \left(e^{2t\Delta}\right)(e_1,e_0)
    \le
    \deg(x)
    \max_{\substack{e_0\in E\\x\in e_0}}
    \left(e^{2t\Delta}\right)(e_1,e_0).
    \]
    Combining the preceding inequalities gives the asserted chain.
    \end{proof}
\section{Combes--Thomas Bound and Proof of the Main Theorem}\label{sec:3}

\Cref{thm:matrix-exponential} gives the bound:
\begin{equation}\label{eq:qss-edge-recursion}
 \|[A(t), B]\| \le 2\deg(x) \cdot \bigl(e^{2t\Delta}\bigr)(e_1,e_0),
\end{equation} 
where $A$ is supported at a site $x$ with $\|A\|\le 1$, $B$ is supported at a site $y$ with $\|B\|\le 1$, $t\ge 0$, $e_1\in E$ is any edge containing $y$, and $e_0\in E$ is an edge containing $x$ at which $(e^{2t\Delta})(e_1,\cdot)$ is largest; this edge depends on $t$. 

In this section, we want to estimate the entries of $e^{2t\Delta}$ when the corresponding edges are far apart with respect to the line graph distance. Earlier works \cite{Hastings04, HastingsKoma, NachtergaeleSims06, NachtergaeleSims} have performed this step by expanding $e^{2t\Delta}$ using path counting. Instead, we use Cauchy's integral formula to express the matrix exponential in terms of resolvents $(z-\Delta)^{-1}$:
\begin{equation}
    e^{2t\Delta} = \frac{1}{2\pi i}\int_{\Gamma} e^{tz}(z- 2\Delta)^{-1}dz
\end{equation} where $\Gamma \subseteq \mathbb{C}$ is the boundary of an open domain containing the full spectrum of $2\Delta$. We will consider here, for $\rho>0$, the stadium-shaped contour $\Gamma_\rho$ (\Cref{fig:stadium}).  

\begin{figure}[b]
\centering
\begin{tikzpicture}[scale=1.6,>={Stealth[length=8pt,width=6pt]}]
    \def\LA{2}\def\rh{1.7}
    \fill[figaccent!8]
        (-\LA,-\rh) -- (\LA,-\rh)
        arc[start angle=-90,end angle=90,radius=\rh]
        -- (-\LA,\rh)
        arc[start angle=90,end angle=270,radius=\rh] -- cycle;
    \draw[->,black!40] (-5.0,0) -- (5.0,0) node[anchor=south east,black!55,inner sep=2pt,xshift=-7pt,yshift=3pt]{$\operatorname{Re}z$};
    \draw[->,black!40] (0,-2.2) -- (0,2.75) node[anchor=north east,black!55,inner sep=2pt,xshift=-4pt,yshift=-1pt]{$\operatorname{Im}z$};
    \draw[figaccent,line width=1.1pt]
        (-\LA,-\rh) -- (\LA,-\rh)
        arc[start angle=-90,end angle=90,radius=\rh]
        -- (-\LA,\rh)
        arc[start angle=90,end angle=270,radius=\rh] -- cycle;
    \node[figaccent,anchor=south east,inner sep=4pt] at ($(-\LA,0)+(130:\rh)$) {$\Gamma_\rho$};
    \node[figaccent!60!black] at (-1.3,0.9) {$D_\rho$};
    \fill[figaccent!60!black] (1.0,\rh) circle (1.6pt);
    \node[figaccent!60!black,above] at (1.0,\rh) {$z$};
    \draw[densely dotted,black!70] (1.0,\rh) -- (1.0,0);
    \node[black!70,anchor=east,inner sep=2pt] at (1.0,0.85) {$\rho$};
    \draw[dashed,black!70] (1.0,\rh) -- (1.3,0);
    \node[black!70,anchor=west,inner sep=2pt] at (1.2,0.95) {$\delta$};
    \draw[-{Stealth[length=11pt,width=9pt]},figaccent,line width=1.1pt] (-0.8,\rh) -- (-1.2,\rh);
    \draw[-{Stealth[length=11pt,width=9pt]},figaccent,line width=1.1pt] (0.8,-\rh) -- (1.2,-\rh);
    \draw[black,line width=2pt] (-\LA,0) -- (\LA,0);
    \foreach \x in {-1.7,-1.2,-0.4,0.5,1.3,2} \fill (\x,0) circle (1.2pt);
    \node[black!70,below=3pt] at (-0.8,0) {$\operatorname{spec}(M)$};
    \node[black!70,below=4pt] at (-\LA,0) {$\mathllap{-}\norm{M}$};
    \node[black!70,below=4pt] at (\LA,0) {$\norm{M}$};
    \draw[<->,black!60] (-\LA,0) -- ++(145:\rh-0.03);
    \node[black!70] at ($(-\LA,0)+(145:0.5*\rh)+(0.12,0.2)$) {$\rho$};
    \draw[<->,black!60] (\LA,0) -- ++(35:\rh-0.03);
    \node[black!70] at ($(\LA,0)+(35:0.5*\rh)+(-0.12,0.2)$) {$\rho$};
    \draw[black!70] (-0.07,\rh) -- (0.07,\rh);
    \draw[black!70] (-0.07,-\rh) -- (0.07,-\rh);
    \node[black!70,anchor=south east,inner sep=2pt] at (-0.03,\rh+0.03) {$\rho$};
    \node[black!70,anchor=north east,inner sep=2pt] at (-0.03,-\rh-0.03) {$-\rho$};
    \draw[black!70] (\LA+\rh,0.08) -- (\LA+\rh,-0.08);
    \node[black!70,anchor=north west] at (\LA+\rh-0.05,-0.1) {$\norm{M}+\rho$};
\end{tikzpicture}
\caption{The stadium contour $\Gamma_\rho$.}
\label{fig:stadium}
\end{figure}

Off-diagonal decay of resolvents of band matrices was first proved by Combes and Thomas in \cite{CombesThomas}. We use a finite-space formulation from \cite[Theorem 10.5]{AizenmanWarzel}.

\begin{theorem}[Combes--Thomas estimate]\label{thm:ct}
Let $(X,d)$ be a finite metric space and let $A$ be a self-adjoint operator on $\ell^2(X)$. For $\mu>0$, set
\[
S_\mu \defeq \sup_{x\in X}\ \sum_{y\in X} |A(x,y)|\bigl(e^{\mu\,d(x,y)}-1\bigr).
\]
Then, for every $z\notin\operatorname{spec}(A)$ and every $\mu>0$ with $S_\mu<\delta$, where $\delta\defeq\operatorname{dist}\bigl(z,\operatorname{spec}(A)\bigr)$,
\begin{equation}\label{eq:CT-exponential}
\bigl|(z-A)^{-1}(x,y)\bigr| \le \frac{e^{-\mu\,d(x,y)}}{\delta-S_\mu}
\qquad\text{for all } x,y\in X .
\end{equation}
\end{theorem}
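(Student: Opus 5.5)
The plan is the classical conjugation argument of Combes and Thomas. Fix the two points $x,y\in X$ and introduce the weight $\rho(u)\defeq d(u,y)$. By the triangle inequality it is $1$-Lipschitz: $|\rho(u)-\rho(v)|\le d(u,v)$. Let $W$ be the diagonal (multiplication) operator $e^{\mu\rho}$ on $\ell^2(X)$; it is invertible because $X$ is finite. Set $A_\mu\defeq WAW^{-1}$, with entries $A_\mu(u,v)=A(u,v)e^{\mu(\rho(u)-\rho(v))}$.

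Since $z-A_\mu=W(z-A)W^{-1}$, the operator $z-A_\mu$ is invertible exactly when $z-A$ is, and
\[
(z-A)^{-1}(x,y)=e^{-\mu\rho(x)}\,(z-A_\mu)^{-1}(x,y)\,e^{\mu\rho(y)}=e^{-\mu d(x,y)}\,(z-A_\mu)^{-1}(x,y).
\]
It therefore suffices to show $\|(z-A_\mu)^{-1}\|\le (\delta-S_\mu)^{-1}$, because a matrix entry is bounded by the operator norm.

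\textbf{Step 1: the perturbation is small.} Write $B_\mu\defeq A_\mu-A$. Its entries are $A(u,v)\bigl(e^{\mu(\rho(u)-\rho(v))}-1\bigr)$. Using the elementary inequality $|e^{s}-1|\le e^{|s|}-1$ together with the Lipschitz bound on $\rho$, we get
\[
|B_\mu(u,v)|\le |A(u,v)|\bigl(e^{\mu d(u,v)}-1\bigr).
\]
I would then apply the Schur test: $\|B_\mu\|$ is at most the geometric mean of the maximal row sum and the maximal column sum of $|B_\mu|$. The row sums are bounded by $S_\mu$ by definition. Because $A$ is self-adjoint, $|A(u,v)|=|A(v,u)|$, and $d$ is symmetric, so the column sums obey the same bound. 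Hence $\|B_\mu\|\le S_\mu$.

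\textbf{Step 2: Neumann series.} Since $A$ is self-adjoint, $\|(z-A)^{-1}\|=1/\delta$. Write
\[
z-A_\mu=(z-A)\bigl(I-(z-A)^{-1}B_\mu\bigr),
\]
where $\|(z-A)^{-1}B_\mu\|\le S_\mu/\delta<1$. The Neumann series then gives
\[
\|(z-A_\mu)^{-1}\|\le \frac{1}{\delta}\cdot\frac{1}{1-S_\mu/\delta}=\frac{1}{\delta-S_\mu}.
\]
Combining this with the entry identity above yields \eqref{eq:CT-exponential}.

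The only delicate points are bookkeeping. First, the sign of the weight must be chosen so that the entry identity produces $e^{-\mu d(x,y)}$ rather than its reciprocal; taking $\rho=d(\cdot,y)$ does this. Second, the Schur-test bound on $\|B_\mu\|$ needs control of both rows and columns, and this is exactly where self-adjointness of $A$ (and symmetry of $d$) is used. Finiteness of $X$ removes all domain issues, so I expect no real obstacle beyond verifying these two points.
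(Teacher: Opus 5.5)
Your proof is correct. It is the standard Combes--Thomas argument: conjugate by the weight $e^{\mu d(\cdot,y)}$, bound the perturbation $WAW^{-1}-A$ by $S_\mu$ using the Schur test (with self-adjointness of $A$ controlling the column sums), and finish with a Neumann series around $(z-A)^{-1}$. The paper gives no proof of its own and simply imports the statement from \cite[Theorem~10.5]{AizenmanWarzel}; your argument is a valid self-contained proof of it, and the finiteness of $X$ removes any boundedness issues for the weight.
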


According to \Cref{thm:ct}, if $z$ is far enough from the spectrum, in the sense $\delta>S_\mu$, the resolvent matrix element between distant points decays exponentially at rate $\mu$. Values $z \in \Gamma_\rho$ stay at distance at least $\rho$ from the spectrum, and this allows us to apply the Combes--Thomas estimate to $A$ once $\rho>S_\mu$, with $S_\mu$ as in \Cref{thm:ct}, which yields decay of the off-diagonal entries of $e^{tA}$.

\begin{lemma}\label{lem:exponential-decay}
    Let $M$ be a self-adjoint operator on $\ell^2(X)$, and let $\mu,\rho>0$ be such that $S_\mu<\rho$. For all $t>0$ and all $x,y\in X$,
    \begin{equation}\label{eq:improved-bound}
    \bigl|e^{tM}(x,y)\bigr|
    \le \frac{2\norm{M}+\pi\rho}{\pi(\rho-S_\mu)}\;e^{(\norm{M}+\rho)\,t-\mu\,d(x,y)} .
\end{equation}
\end{lemma}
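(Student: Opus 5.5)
We will represent $e^{tM}$ through the Cauchy integral over the stadium contour $\Gamma_\rho$ of \Cref{fig:stadium}. This contour consists of two horizontal segments $\{s\pm i\rho : -\norm{M}\le s\le\norm{M}\}$ and two semicircles of radius $\rho$ centered at $\pm\norm{M}$. Since $M$ is self-adjoint, $\operatorname{spec}(M)\subseteq[-\norm{M},\norm{M}]$, so $\Gamma_\rho$ encloses the spectrum. The holomorphic functional calculus (or simply diagonalizing $M$ and applying Cauchy's formula to each eigenvalue) gives
\[
e^{tM}=\frac{1}{2\pi i}\int_{\Gamma_\rho} e^{tz}(z-M)^{-1}\,dz ,
\]
and we take the $(x,y)$ entry on both sides. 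The point is that the matrix entry passes inside the integral, so
\[
\bigl|e^{tM}(x,y)\bigr|\le \frac{1}{2\pi}\int_{\Gamma_\rho}|e^{tz}|\,\bigl|(z-M)^{-1}(x,y)\bigr|\,|dz|.
\]

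Next we bound the two factors uniformly on $\Gamma_\rho$. Every $z\in\Gamma_\rho$ is at distance exactly $\rho$ from the interval $[-\norm{M},\norm{M}]$: on the segments the distance is $|\operatorname{Im}z|=\rho$, and on the semicircles it is the distance to the endpoint $\pm\norm{M}$, which is again $\rho$. Hence $\delta=\operatorname{dist}(z,\operatorname{spec}M)\ge\rho>S_\mu$. Applying \Cref{thm:ct} with $A=M$, and using that $\delta\mapsto 1/(\delta-S_\mu)$ is decreasing, we get
\[
\bigl|(z-M)^{-1}(x,y)\bigr|\le \frac{e^{-\mu d(x,y)}}{\rho-S_\mu}.
\]
For the exponential factor, $\operatorname{Re}z\le\norm{M}+\rho$ on all of $\Gamma_\rho$, so $|e^{tz}|=e^{t\operatorname{Re}z}\le e^{(\norm{M}+\rho)t}$ for $t>0$.

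Finally, the length of $\Gamma_\rho$ is $4\norm{M}$ (two segments of length $2\norm{M}$) plus $2\pi\rho$ (two semicircles). Multiplying the bounds gives
\[
\bigl|e^{tM}(x,y)\bigr|\le \frac{4\norm{M}+2\pi\rho}{2\pi(\rho-S_\mu)}e^{(\norm{M}+\rho)t-\mu d(x,y)}
=\frac{2\norm{M}+\pi\rho}{\pi(\rho-S_\mu)}e^{(\norm{M}+\rho)t-\mu d(x,y)},
\]
which is exactly \eqref{eq:improved-bound}.

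The only step requiring some care is justifying the contour representation and the distance claim, in particular the case $M=0$. There the segments degenerate and the contour is a circle of radius $\rho$, which is still fine. Everything else is a direct length-times-supremum estimate; no cleverness is needed since the crude bounds already give the stated constant.
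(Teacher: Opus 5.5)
Your proposal is correct and follows essentially the same route as the paper: the Cauchy integral over the stadium contour $\Gamma_\rho$ (justified by diagonalizing $M$), the Combes--Thomas bound using $\delta\ge\rho>S_\mu$, the estimate $|e^{tz}|\le e^{(\norm{M}+\rho)t}$, and the contour length $4\norm{M}+2\pi\rho$. Your remark that the contour degenerates to a circle when $M=0$ is a small extra check that the paper does not spell out.
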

\begin{proof}
Let $D_\rho\subseteq\C$ denote the interior of the stadium-shaped contour $\Gamma_\rho$ defined as the set of points at distance $\rho$ from $[-\| M \|, \|M\|]$, oriented positively (see \Cref{fig:stadium}).

Since $M$ is self-adjoint,
\begin{equation}\label{eq:1h}
\operatorname{spec}(M)\subseteq[-\norm{M},\norm{M}].
\end{equation}
By the definition of $\Gamma_\rho$,
\[
[-\norm{M},\norm{M}]\subseteq D_\rho\subseteq\C.
\]

The domain $D_\rho$ is bounded and its boundary $\Gamma_\rho$ is piecewise-smooth. Hence, for every $\lambda\in\operatorname{spec}(M)$, by the Cauchy Integral Formula:
\[
e^{t\lambda} = \frac{1}{2\pi i}\oint_{\Gamma_\rho} \frac{e^{tz}}{z-\lambda}\,dz.
\]
Tracing this argument for each eigenvector of $M$, and using that $M$ is diagonalizable, we obtain:
\[
e^{tM} \defeq \sum_{n = 0}^\infty \frac{(tM)^n}{n!}= \frac{1}{2\pi i}\oint_{\Gamma_\rho} e^{tz}\,(z-M)^{-1}\,dz.
\]
Taking the $(x,y)$ matrix entry and applying the triangle inequality,
\[
\bigl|e^{tM}(x,y)\bigr| \le \frac{1}{2\pi}\oint_{\Gamma_\rho} \bigl|e^{tz}\bigr|\,\bigl|(z-M)^{-1}(x,y)\bigr|\,|dz|.
\]

For $z\in\Gamma_\rho$, let 
$
\delta \defeq \operatorname{dist}(z,\operatorname{spec}(M)).
$
Since $\Gamma_\rho$ lies at distance $\rho$ from $[-\norm{M},\norm{M}]$, combining this with \eqref{eq:1h}
we have
$
\delta\ge\rho.
$
Since $S_\mu<\rho$, it follows that
$
S_\mu<\rho\le\delta.
$
Thus, by the Combes--Thomas estimate (\Cref{thm:ct}),
\[
\bigl|(z-M)^{-1}(x,y)\bigr| \le \frac{1}{\delta-S_\mu}e^{-\mu\,d(x,y)} \le \frac{1}{\rho-S_\mu}e^{-\mu\,d(x,y)}.
\]
Also, for $z\in\Gamma_\rho$,
$
\operatorname{Re}z\le\norm{M}+\rho,
$
and hence
\[
\bigl|e^{tz}\bigr| = e^{t\operatorname{Re}z} \le e^{(\norm{M}+\rho)t}.
\]
Finally, $|\Gamma_\rho| = 4\norm{M}+2\pi\rho$ (see \Cref{fig:stadium}).
Substituting these estimates into the contour integral,
\[
\bigl|e^{tM}(x,y)\bigr| \le \frac{1}{2\pi}\bigl(4\norm{M}+2\pi\rho\bigr)\,e^{(\norm{M}+\rho)t}\,\frac{1}{\rho-S_\mu}\,e^{-\mu\,d(x,y)}.
\]
Thus,
\[
\bigl|e^{tM}(x,y)\bigr| \le \frac{2\norm{M}+\pi\rho}{\pi(\rho-S_\mu)}\,e^{(\norm{M}+\rho)\,t-\mu\,d(x,y)}.
\]
\end{proof}

We specialize \Cref{lem:exponential-decay} to $X=E$ with the distance $d$ of \Cref{def:line-graph}, to $M =\Delta$ at time $2t$, and, assuming $\DD\ge 2$ so that $S_\mu>0$ (the case $\DD=1$ is settled directly in the proof of \Cref{thm:main}), to $\rho = 2S_\mu$. The bound \eqref{eq:improved-bound} becomes
\begin{equation}
    \bigl|e^{2t \Delta}(e,e')\bigr|
    \le \frac{2\norm{\Delta}+2\pi S_\mu}{\pi S_\mu}\;e^{2(\norm{\Delta}+2 S_\mu)\,t-\mu\,d(e,e')}.
\end{equation}

Each edge $e=\{x,y\}$ meets at most $(\deg x-1)+(\deg y-1)\le 2\mathcal D-2$ other edges, so $\deg_{L(G)}(e) \le 2\mathcal D-2 .$ Further $\|\Delta\|$ is at most its maximal row sum (Schur test, \cite[Proposition~10.6]{AizenmanWarzel}), so $\|\Delta\|\le 1+\sup_e \deg_{L(G)}(e)\le 2\mathcal D$.
Moreover, 
\[
  S_\mu=\sup_{e\in E}\sum_{e'\in E}\Delta(e,e')\bigl(e^{\mu d(e,e')}-1\bigr)
       =(e^\mu-1)\sup_{e\in E}\deg_{L(G)}(e)\le 2\mathcal D\,(e^\mu-1).
\] We use $\| \Delta \| \leq 2\DD$, $S_\mu \leq 2\DD(e^\mu-1)$ to obtain, from the previous bound:
\begin{equation}\label{eq:improved-delta}
    \bigl|e^{2t \Delta}(e,e')\bigr|
    \le \left( \dfrac{4\DD}{\pi S_\mu}+2 \right)\;e^{-4\DD t + 8\DD  e^\mu t-\mu\,d(e,e')}.
\end{equation}

\medskip
We are now ready to prove the main result. 

\begin{proof}[Proof of \Cref{thm:main}]
Since the commutator is bilinear, it suffices to prove the bound for $\|A\| = \|B\| = 1$, which we assume throughout. Since $G$ is connected with at least one edge, $\DD \geq 1$. If $\DD = 1$, then $G$ is a single edge, so $d(x,y) \le 1$ and the right-hand side of \eqref{eq:main-result} is at least $18/e > 2 \ge \|[A(t),B]\|$; hence there is nothing to prove, and we assume $\DD \geq 2$ henceforth.

Now, fix $t>0$ and edges $e,e'\in E$, and let $d(e,e')$ denote their distance in the line graph $L(G)$. 
Define $\Phi\colon\mathbb{R}\to\mathbb{R}$ as the part of the exponent in \eqref{eq:improved-delta} that depends on $\mu$:
\[
    \Phi(\mu) \;=\; 8\DD t\,e^{\mu} - \mu\,d(e,e').
\]
Then $\Phi$ is smooth with $\Phi''(\mu)=8\DD t\,e^{\mu}>0$, so $\Phi$ is
strictly convex; consequently any stationary point is its unique global
minimizer. Since
\[
    \Phi'(\mu) \;=\; 8\DD t\,e^{\mu} - d(e,e') ,
\]
the unique stationary point is
\begin{equation}\label{eq:phi-min}
    \mu_\ast \;:=\; \ln\!\left(\frac{d(e,e')}{8\DD t}\right),
    \qquad\text{and}\qquad
    \Phi(\mu_\ast) \;=\; d(e,e')\left(1 - \ln\!\left(\frac{d(e,e')}{8\DD t}\right)\right).
\end{equation}
Note that if $d(e,e') > 32\DD t$, then $e^{\mu_\ast} = d(e,e')/(8\DD t) > 4$, so
$\mu_\ast > \ln 4 > 0$; in particular $\mu_\ast$ lies in the admissible range
of the estimate.

\medskip
By \eqref{eq:improved-delta}, for every $\mu>0$,
\begin{equation}\label{eq:kernel}
    \bigl|e^{2t\Delta}(e,e')\bigr|
    \;\le\; \left(\frac{4\DD}{\pi S_\mu} + 2\right)
        \exp\!\bigl(-4\DD t + \Phi(\mu)\bigr),
    \qquad
    S_\mu \;
    \defeq\; \sup_{e\in E}\ \sum_{\substack{e'\sim e\\ e'\neq e}}\bigl(e^{\mu}-1\bigr),
\end{equation}
the exponent being exactly $-4\DD t+\Phi(\mu)$. Then, for $d(e,e')>32\DD t$, substituting $\mu=\mu_\ast$ gives:
\begin{align}
    \bigl|e^{2t\Delta}(e,e')\bigr| &
    \;\le\; \left(\frac{4\DD}{\pi S_{\mu_*}} + 2\right)
        \exp\!\bigl(-4\DD t + \Phi(\mu_*)\bigr) 
        \\
        & =\; \left(\frac{4\DD}{\pi S_{\mu_\ast}} + 2\right)
        e^{-4\DD t}\left(\frac{8e\DD t}{d(e,e')}\right)^{d(e,e')} \leq \left(\frac{4\DD}{\pi S_{\mu_\ast}} + 2\right)
        \left(\frac{8e\DD t}{d(e,e')}\right)^{d(e,e')}.
        \label{eq:optimized}
\end{align}
Since $S_{\mu_\ast}$ occurs in a denominator, it suffices to bound it from below.

\medskip
Every off-diagonal nonzero entry of the adjacency operator $\Delta$ equals
$1$, so the inner sum in \eqref{eq:kernel} has exactly
$\deg_{L(G)}(e)$ terms, each equal to $e^{\mu}-1$; therefore
\[
    S_\mu \;=\; \bigl(e^{\mu}-1\bigr)\,
        \sup_{e\in E} \deg_{L(G)}(e).
\]
The maximal degree of $L(G)$ is at least one less than the maximal degree of $G$. Moreover, assuming $d(e,e')>32\DD t$, the observation above that $e^{\mu_\ast} > 4$ gives
$e^{\mu_\ast}-1 > 4-1 = 3$. Combining the two,
\[
    S_{\mu_\ast} \;\ge\; 3(\DD-1),
    \qquad\text{which implies}\qquad
    \frac{4\DD}{\pi S_{\mu_\ast}} + 2 \;\le\; \frac{4}{3\pi}\left( \frac{\DD}{\DD - 1}\right) + 2 \;\le\; \frac{8}{3\pi} + 2\;<\; 3 ,
\]
since by assumption, $\DD \ge 2$. Substituting into \eqref{eq:optimized} and discarding the factor
$e^{-4\DD t}\le 1$, we obtain, for all $e,e'$ with $d(e,e')>32\DD t$,
\begin{equation}\label{eq:decay}
    \bigl|e^{2t\Delta}(e,e')\bigr|
    \;\le\; 3\left(\frac{8e\DD t}{d(e,e')}\right)^{d(e,e')}.
\end{equation}

\medskip
Let $A$ be supported at $x$ with $\|A\| = 1$, let $B$ be supported at $y$
with $\|B\| = 1$, fix an edge $e_1\ni y$, and let $e_0\ni x$ be an edge at which $(e^{2t\Delta})(e_1,\cdot)$ is largest among the edges containing $x$, as in \eqref{eq:qss-edge-recursion}. By
\eqref{eq:qss-edge-recursion},
\begin{equation}\label{eq:recursion}
    \bigl\|[A(t),B]\bigr\|
    \;\le\; 2\deg(x)\,\bigl(e^{2t\Delta}\bigr)(e_1,e_0).
\end{equation}
We now consider two regimes: suppose first that $d(e_1,e_0)>32\DD t$. Then \eqref{eq:decay} yields
\begin{equation}\label{eq:superexp-edge}
    \bigl\|[A(t),B]\bigr\|
    \;\le\; 6\deg(x)\left(\frac{8e\DD t}{d(e_1,e_0)}\right)^{d(e_1,e_0)} .
\end{equation}
Applying the elementary inequality $\bigl(1+\tfrac{s}{n}\bigr)^{n}\le e^{s}$,
valid for $n>0$ and $s>-n$, with $n=d(e_1,e_0)$ and
$s=8e\DD t-d(e_1,e_0)>-n$, gives
\[
    \bigl\|[A(t),B]\bigr\|
    \;\le\; 6\deg(x)\,e^{\,8e\DD t-d(e_1,e_0)} .
\]
Using $d(e_1,e_0)\ge d(x,y)-1$ (see \Cref{fig:line-graph}), $\deg(x)\le\DD$, $6e<18$ and
$8e<32$, we conclude
\begin{equation}\label{eq:main-bound}
    \bigl\|[A(t),B]\bigr\|
    \;\le\; 6\DD\,e^{\,8e\DD t-d(x,y)+1}
    \;\le\; 18\DD\,e^{\,8e\DD t-d(x,y)}
    \;\le\; 18\DD\,e^{\,32\DD t-d(x,y)} .    
\end{equation}
This proves \eqref{eq:main-result} in the first regime. As a by-product, we record the superexponential bound \eqref{eq:factorial} stated in \Cref{sec:lr-bound}: since $n \mapsto (8e\DD t/n)^{n}$ is decreasing for $n > 8\DD t$, and $d(e_1,e_0) \ge d(x,y) - 1$ with $\deg(x) \le \DD$, \eqref{eq:superexp-edge} gives, whenever $d(x,y) > 32\DD t + 1$ (which places us in the present regime),
\[
    \bigl\|[A(t),B]\bigr\|
    \;\le\; 6\DD\left(\frac{8e\DD t}{d(x,y)-1}\right)^{d(x,y)-1} ,
\]
which is \eqref{eq:factorial}.

\medskip
Suppose now $d(e_1,e_0)\le 32\DD t$. The trivial bound gives
$\|[A(t),B]\|\le 2\|A\|\,\|B\|\le 2$. On the other hand,
$d(x,y)\le d(e_1,e_0)+1\le 32\DD t+1$, so $32\DD t-d(x,y)\ge -1$, and since
$\DD\ge 1$,
\[
    18\DD\,e^{\,32\DD t-d(x,y)} \;\ge\; \frac{18}{e} \;>\; 2
    \;\ge\; \bigl\|[A(t),B]\bigr\| .
\]
Thus the bound \eqref{eq:main-result} holds in this regime as well, and therefore holds
for all $x,y$ and all $t>0$. This completes the proof. 
\end{proof}

\end{document}